\newtheorem{theorem}{Theorem}
\newtheorem{proposition}[theorem]{Proposition}
\newenvironment{proof}[1][Proof]{\textbf{#1.}}{\ \rule{0.5em}{0.5em}}
\patchcmd{\abstract}{\titlepage}{\cleardoublepage}{}{}
\patchcmd{\endabstract}{\endtitlepage}{\clearpage}{}{}
\begin{document}
\title{\bf{Master Thesis} \vspace*{1cm}\\ 
	Wave Group Evolution and Interaction}
\author{\bf{Natanael Karjanto} \vspace*{1cm}\\ 
Supervisors:\\ 
\textbf{Prof. Dr. E. (Brenny) van Groesen}\\ 
\textbf{Ir. Gert Klopman}\\
\textbf{Dr. Andonowati} \\ 
\textit{Department of Applied Mathematics, University of Twente} \\
\textit{P.O. Box 217, 7500 AE Enschede, The Netherlands}}
\date{\small 13 June 2003}
\maketitle

\begin{abstract}
\setcounter{page}{2}
\addcontentsline{toc}{chapter}{Abstract}
This thesis deals with mathematical and physical aspects of deterministic freak wave generation in a hydrodynamic laboratory. We adopt the nonlinear Schr\"odinger (NLS) equation as a mathematical model for the evolution of the surface gravity wave packet envelopes. As predicted theoretically and observed experimentally by Benjamin and Feir (1967), Stokes waves on deep water are unstable under a sideband modulation. The plane-wave solution of the NLS equation, which represents the fundamental component of the Stokes wave, experiences a similar modulational instability under the linear perturbation theory. A nonlinear extension for this perturbed wave is given by the family of Soliton on Finite Background (SFB), where an exact expression is available and can be derived analytically by several techniques. We discuss the characteristics of the SFB family as a prime candidate for freak wave events. The corresponding physical wave field exhibits intriguing phenomena of wavefront dislocation and phase singularity.
\end{abstract}

\chapter*{Acknowledgement}
This work is executed at the University of Twente, The Netherlands as part of the project `Prediction and generation of deterministic extreme waves in hydrodynamic laboratories' (TWI.5374) of the Netherlands Organization of Scientific Research NWO, subdivision Applied Sciences STW.
\addcontentsline{toc}{chapter}{Acknowledgement}

The author is grateful for the superlative supervision, patient teaching, fruitful discussion, upbuilding advice, joyful friendship, and incessant prayer to the following individuals. (The list of names is ordered alphabetically according to surname.) 
\begin{center}
\begin{minipage}{13cm}
\begin{multicols}{2}
\noindent
Andonowati \\
Jurjen Battjes \\
Edi Cahyono \\
Bernard Geurts \\
Manfred Hammer \\
Kirankumar Hiremath \\
Ren\'e Huijsmans \\
Salemah Ismail \\
Gerard Jeurnink \\
Sandra Kamphuis \\
Zakaria Karjanto \\
Gert Klopman \\
Giok Lien Lauw \\
Helena Margaretha \\
Gjerrit Meinsma \\	
Jaap Molenaar \\
Abdoreza Moqadasi \\
Arthur Mynett \\
Jacqueline Nicolau \\
Sri Nurdiati \\
Miguel Onorato \\
Lars Pesch \\
Pearu Peterson \\
Marielle Plekenpol \\
Gerhard Post \\
Serevia Revita \\
Angela Sembiring \\
Edy Soewono \\
Ardhasena Sopahelukawan \\
Agus Suryanto \\
Hadi Susanto \\
Wooi Nee Tan \\
Ferli Tiani \\
Frits van Beckum \\
Imelda van de Voorde \\
Jaap van der Vegt \\
Christa van der Meer \\
Stephan van Gils \\
E. (Brenny) van Groesen \\
Willem Visser \\
Katarzyna Wac \\
Elizabeth Yasmine Wardoyo
\end{multicols} 
\end{minipage}
\end{center}

\tableofcontents

\chapter{Introduction} \label{introduction}

This Master thesis is the preparatory stage for the PhD project `Prediction and generation of deterministic extreme waves in hydrodynamic laboratories'. The term \textit{`extreme waves'} refers to very high amplitude and very steep waves, whose heights exceed a factor of 2.2 times the significant wave height of a measured wave train (see for example~\cite{Dean90}). Such waves are also described as \textit{`freak waves'}, \textit{`rogue waves'}, or \textit{`giant waves'}. According to~\cite{Osbo00}, these are rare events in nonlinear gravity waves which occasionally rise to surprising heights (as compared to linear theory of Gaussian random processes) above the background wave field. Freak waves occur in bad weather conditions when the average wave height is high and several big waves come together to create a `monster'. Detailed knowledge about their appearance is not yet available, but in the literature four main reasons are described: \\
(1) wave-current interaction: when wind pushes against a strong current (e.g. in South Africa); \\
(2) wave-bathymetry interaction: when a shallow sea bottom focuses waves to one spot (e.g. in Norway); \\
(3) extreme wave statistics: by chance (1 in 10,000-year statistic, according to the linear theory of Gaussian random processes); and  \\
(4) wave nonlinearity: when waves become unstable and start to self-focus. \\
(See the programme summary of `Freak Wave' in~\cite{freakwave,freakwaveqa}.)

In this thesis, we focus on mathematical modelling of the extreme waves and interpretation of the mathematical results into a physical situation at sea. One specific model for describing envelope surface waves is the Nonlinear Schr\"{o}dinger (NLS) equation. This equation (or higher-order extensions) can describe many of the features of the dynamics of extreme waves, which are thought to arise as a result of a \textit{modulational instability} phenomenon. This process appears in many nonlinear wave systems when a monochromatic (plane) wave is perturbed. Because of this instability, small amplitude perturbations grow rapidly under the combined effects of nonlinearity and dispersion~\cite{KipD00}. In Chapter~\ref{BF_SFB},\footnote[1]{This chapter can also be found in~\cite{Karj02}.} we consider one example of the modulational instability phenomenon. We analyze the evolution of the wave group envelope, using both linear and nonlinear theories. Linear theory predicts exponential growth of the amplitude when certain conditions are satisfied---the \textit{Benjamin--Feir (BF) instability}~\cite{Scot99}. However, when the amplitude becomes large, higher-order nonlinear effects must be taken into account, that, as it turns out, will prevent further exponential growth. To investigate this growth in detail, we study a nonlinear model of BF instability in the NLS regime, namely the \textit{Soliton on Finite Background} (SFB). It is found that in the range of instability, the maximum amplitude amplification factor can reach at most three times of the initial amplitude. A similar NLS solution obtained using the `inverse scattering transform' is also considered in~\cite{Osbo00}, with amplitude amplification factor 2.4.

In Chapter~\ref{WD_PS},\footnote[7]{This chapter is also part of~\cite{Groe02}.} we study the physical wave profile that describes the surface elevation of the waves. It is found that the physical wave profiles corresponding to SFB of the NLS equation have a \textit{wave dislocation} phenomenon for perturbation wavenumbers smaller than a certain critical value in the instability region. Eventually, the appearance of wave dislocations is related to the real amplitude having zeros at time $t = 0$. The corresponding trajectories in the dispersion plane of the local wavenumber and the local frequency show \textit{phase singularities}. Furthermore, we derive the \textit{phase--amplitude} equations for the NLS equation which are coupled relations between the phase and the amplitude. The phase equation is known as the \textit{nonlinear dispersion relation} and the amplitude equation is related to the \textit{wave action conservation equation}. It is observed that vanishing of the amplitude is related to the phase singularity and that the focusing effect between the wave dislocations leads to large amplitude amplifications.

In Chapter~\ref{perspective}, we give a motivation for studying wave group interaction; the relation among bi-harmonic, positively modulated bi-harmonic, and Benjamin--Feir modulated wave groups, as well as their properties. We also present an overview of some papers discussing the occurrence of extreme waves in the random oceanic sea. Several conclusions are given in Chapter~\ref{conclusions}, some open problems and future research are given in Chapter~\ref{open_problems}.

\chapter{Benjamin--Feir Instability and Soliton on Finite Background} \label{BF_SFB}

\section{Modelling of Wave Envelopes}

\subsection{Linear Theory}

In the linear theory of water waves, we can restrict the analysis of a surface elevation $\eta(x,t)$ to a \textit{one-mode solution} of the form $\eta(x,t) = a\,e^{i(k x - \omega t)} + \text{c.c.}$ (complex conjugate), where $a$ is constant amplitude, $k$ is wavenumber, and $\omega$ is frequency. Then a general solution is a superposition of one-mode solutions. For surface waves on water of constant depth $h$ measured from the still water level, the parameters $k$ and $\omega$ are related by the following \textit{linear dispersion relation}~\cite{Debn94}:
\begin{equation}
\omega^{2} = g\,k\,\tanh k\,h,
\end{equation}
where $g$ is the gravitational acceleration. We also write the linear dispersion relation as $\omega \equiv \Omega(k) = k\,\sqrt{\frac{g\,\tanh k\,h}{k}}$. This dispersion relation can be derived from the linearized equations of the full set of equations for water waves. The \textit{phase velocity} is defined as $\frac{\Omega(k)}{k}$ and the \textit{group velocity} is defined as $\frac{d\Omega}{dk} = \Omega'(k)$.

Since the linear wave system has elementary solutions of the form $e^{\,i\,(k\,x - \Omega(k)\,t)}$, it is often convenient to write the general solution of an initial value problem as an integral of its Fourier components~\cite{Scot99}:
\begin{equation}
\eta(x,t) := \int^{\,\infty}_{\!\!-\infty}\!\!\alpha(k)\:e^{\,i\,(k\,x - \Omega(k)\,t)} \,dk,      \label{Fourier}
\end{equation}
where $\alpha(k)$ is the Fourier transform of $\eta(x,0)$. Writing the dispersion relation $\Omega(k)$ as a power series (Taylor expansion) about a fixed wavenumber $k_{0}$ and neglecting $\cal{O}$$(\kappa^{3})$ terms, we find
\begin{equation}
\beta\,\kappa^{2} = -\Omega(k_{0}+\kappa) + \Omega(k_{0}) + \Omega'\!(k_{0})\,\kappa,
\end{equation}
where $\beta = -\frac{1}{2}\,\Omega^{''}\!(k_{0})$. Let us define $k = k_{0} + \kappa$, $\tau = t$, and $\xi = x - \Omega'\!(k_{0})\,t$. Then equation~(\ref{Fourier}) can be written like
\begin{equation}
\eta(x,t) = e^{\,i[k_{0}\,x - \Omega(k_{0})\,t]}\!\! \int^{\,\infty}_{\!\!-\infty}\!\!\alpha(k_{0} + \kappa) \:e^{\,i\,\kappa\,\xi}\,e^{i\,\beta\,\kappa^{2}\,\tau} \,d\kappa. \label{eta}
\end{equation}
Denoting the integral in~(\ref{eta}) with $\psi(\xi,\tau)$, we find that $\psi(\xi,\tau)$ satisfies
\begin{equation}
\frac{\partial \psi}{\partial \tau} + i\,\beta\,\frac{\partial^{2} \psi}{\partial \xi^{2}} = 0. \label{linschro}
\end{equation}
This is the \textit{linear Schr\"{o}dinger} equation for \textit{narrow--banded} spectra. Equation~(\ref{linschro}) is a partial differential equation that describes the time evolution of the envelope of a linear wave packet \cite{Scot99}. Equation~(\ref{linschro}) has a monochromatic mode solution $\psi(\xi,\tau) = e^{i\,(\kappa\, \xi\, - \,\nu\, \tau )}$ where $\nu = - \beta\,\kappa^{2}$.

\subsection{Nonlinear Theory}

Assuming narrow-banded spectra, we consider the wave elevation in the following form $\eta(x,t) = \psi(\xi,\tau)\, e^{i(k_{0} x - \omega_{0} t)} + c.c.$, where $\tau = t$, $\xi = x - \Omega'\!(k_{0})\,t$, $\psi(\xi,\tau)$ is a complex-valued function (called the \textit{complex amplitude}), $k_{0}$ and $\omega_{0}$ are central wavenumber and frequency, respectively. The `physical' variables $x$ and $t$ represent the spatial and temporal variables, respectively. The evolution of the wave elevation is a weakly nonlinear deformation of a nearly harmonic wave with the fixed wavenumber $k_{0}$. If we substitute $\eta$ to the equations which describe the physical motion of the water waves (see below), then one finds that the complex amplitude $\psi(\xi,\tau)$ satisfies the \textbf{\textit{nonlinear Schr\"{o}dinger (NLS) equation}}.

As an example, the NLS equation can be derived from the modified KdV equation $\eta_{t} + i\,\Omega\,(-i\partial_{x})\eta + \frac{3}{4}\,\partial_{x}\eta^2 = 0$~\cite{Cahy02}. With $\tau = t$ and $\xi = x - \Omega'\!(k_{0})\,t$, the corresponding NLS equation\footnote[1]{In our paper \cite{Karj02}, we used the NLS equation with a different factor $i = \sqrt{-1}$. Therefore, the coefficients $\beta$ and $\gamma$ here have the opposite signs compared to the corresponding coefficients in \cite{Karj02}.} reads
\begin{equation}
\bigskip \frac{\partial \psi}{\partial \tau} + i\beta\,\frac{\partial ^{2}\psi}{\partial \xi^{2}} + i\gamma\,\left| \psi\right| ^{2}\psi = 0, \quad \beta, \gamma \in \mathbb{R}, \label{generalNLS}
\end{equation}
where $\beta$ and $\gamma$ depend only on $k_{0}$:
\begin{eqnarray}
\beta = -\frac{1}{2}\,\Omega^{''}\!\!(k_{0}), \label{beta}
\end{eqnarray}
\begin{equation}
\gamma = \frac{9}{4}\,k_{0}\,\left(\frac{1}{\Omega^{'}\!(k_{0})
- \Omega^{'}\!(0)} + \frac{k_{0}}{2\,\Omega(k_{0}) -
\Omega(2k_{0})} \right), \label{gamma}
\end{equation}
where $\omega = \Omega(k)$ is the linear dispersion relation~\cite{Groe98}. This equation arose as a model for packets of waves on deep water.

There are two types of NLS equations:
\begin{itemize}
\item If $\beta$ and $\gamma$ have the same sign, i.e., $\beta\,\gamma > 0$, then~(\ref{generalNLS}) is called the \textit{focusing} NLS equation (an attractive nonlinearity, modulationally unstable [Benjamin--Feir instability])~\cite{Sule99}.

\item If $\beta$ and $\gamma$ have a different sign, i.e., $\beta\,\gamma < 0$, then~(\ref{generalNLS}) is called the \textit{de-focusing} NLS equation (a repulsive nonlinearity, stable solution)~\cite{Sule99}.
\end{itemize}

The wavenumber $k_{\textmd{\scriptsize{crit}}}$, for which $\beta\,\gamma = 0$ holds, is called the \textit{critical wavenumber} or the \textit{Davey--Stewartson value}. Using the physical quantity $g = 9.8 $ m/s$^{2}$ and the water depth $h = 5$~m, then for wavenumbers $k_{0} > k_{\textmd{\scriptsize{crit}}} = 0.23$, the product $\beta\,\gamma > 0$, and the NLS equation is of focusing type. The corresponding critical wavelength is 27.41~m.

In the following, we consider only the focusing NLS equation. The NLS equation~(\ref{generalNLS}) has a \textit{plane--wave} solution
\begin{equation}
A(\tau) = r_{0}\,e^{\,-i\,\gamma\,r_{0}^{2}\,\tau}.			\label{solutiondependontime}
\end{equation}
In physical variables, the surface wave elevation is given as $\eta(x,t) = 2\,r_{0} \cos\,(k_{0}x - \omega_{0}t - \gamma r_{0}^2 t)$. Note that the corresponding phase velocity is $\frac{\omega_{0} + \gamma\,r_{0}^2}{k_{0}}$, where $\omega = \omega_{0} + \gamma\,r_{0}^2$ is a special case of the `nonlinear dispersion relation'. \label{7} In the following section, we analyze the stability of this \textit{plane--wave} solution.

\section{Benjamin--Feir Instability} \label{BF}

To investigate the instability of the NLS plane--wave solution, we perturb the $\xi$--independent function $A(\tau)$ with a small perturbation of the form $\epsilon(\xi,\tau) = A(\tau)\,B(\xi,\tau)$. We look for the cases where under a small perturbation the amplitude of the plane--wave solution grows in time~\cite{Debn94}:
\begin{equation}
\psi(\xi,\tau) = A(\tau)[1 + B(\xi,\tau)]. \label{perturbation}
\end{equation}
Substituting~(\ref{perturbation}) into~(\ref{generalNLS}), and ignoring nonlinear terms we obtain the \textit{linearized} NLS equation:
\begin{equation}
B_{\tau} + i \beta\,B_{\xi\xi} + i \gamma\,r_{0}^{2}(B + B^{*}) = 0.		\label{linearNLS}
\end{equation}
We seek the solutions of (\ref{linearNLS}) in the form
\begin{equation}
B(\xi,\tau) = B_{1}e^{(\sigma \tau + i \kappa \xi)} + B_{2}e^{(\sigma^{*}\tau - i \kappa \xi)},			\label{perturbationsolution}
\end{equation}
where $B_{1}, B_{2} \in \mathbb{C}$, $k_{0} + \kappa$ is the local wavenumber, $\kappa$ is perturbation wavenumber\footnote[1]{In our paper~\cite{Karj02}, we also called the \textit{perturbation} wavenumber as the \textit{modulation} wavenumber.}, and $\sigma \in \mathbb{C}$ is called the \textit{growth rate}. If Re\,($\sigma) > 0$, then the perturbed solution of the NLS equation grows exponentially. This is the criterion for the so-called \textbf{Benjamin--Feir instability} of a one--wave mode with perturbation wavenumber~$\kappa$~\cite{Debn94}.

Substituting the function $B(\xi,\tau)$ in~(\ref{perturbationsolution}) into~(\ref{linearNLS}) yields a pair of coupled equations that can be written in matrix form as follows
\begin{equation}
\left(
\begin{array}{cc}
\sigma - i\beta\,\kappa^{2} + i\gamma\,r_{0}^{2}  &  i\gamma\,r_{0}^{2} \\
-i\gamma\,r_{0}^{2} & \sigma + i \beta\,\kappa^{2} - i\gamma\,r_{0}^{2}
\end{array}
\right)
\left(
\begin{array}{c}
B_{1} \\ B_{2}^{*}
\end{array}
\right) =
\left(
\begin{array}{c}
0 \\ 0
\end{array}
\right). \label{matrix}
\end{equation}
A nontrivial solution to~(\ref{matrix}) can exist only if the determinant of the left-hand side matrix is zero. This condition reads as follows
\begin{equation}
\sigma^2 = \beta\,\kappa^{2}(2\,\gamma\,r_{0}^{2} - \beta\,\kappa^{2}).
\end{equation}
We have the following cases:
\begin{itemize}
\item The growth rate $\sigma$ is real and positive if $\kappa^{2} < 2\frac{\gamma}{\beta}\,r_{0}^{2}$. This corresponds to \textbf{Benjamin--Feir instability}. For these values of $\kappa$, the perturbation amplitude is exponentially amplified in time~\cite{Debn94}. 

\item The growth rate $\sigma$ is purely imaginary if $\kappa^{2} > 2\frac{\gamma}{\beta}\,r_{0}^{2}$. This corresponds to a plane--wave solution that has a bound amplitude for all time~\cite{Debn94}.
\end{itemize}
Thus, the range of instability is given by
\begin{equation}
0 < |\kappa| < \left|\kappa_{\textmd{\scriptsize{crit}}}\right| = \sqrt{\frac{2\,\gamma}{\beta}}\,r_{0}.
\end{equation}
It is easy to find that the `strongest' instability occurs at $\kappa_{\textmd{\scriptsize{max}}} = \sqrt{\frac{\gamma}{\beta}}\,r_{0},$ where the maximum growth rate is $\sigma_{\textmd{\scriptsize{max}}} = \gamma\,r_{0}^2$. Figure~\ref{plot_growth_rate} shows the plot of growth rate $\sigma$ as a function of perturbation wavenumber $\kappa$ for $r_{0} = \beta = \gamma = 1$.
\begin{figure}[h]
\centering
\includegraphics[width = 0.5\textwidth]{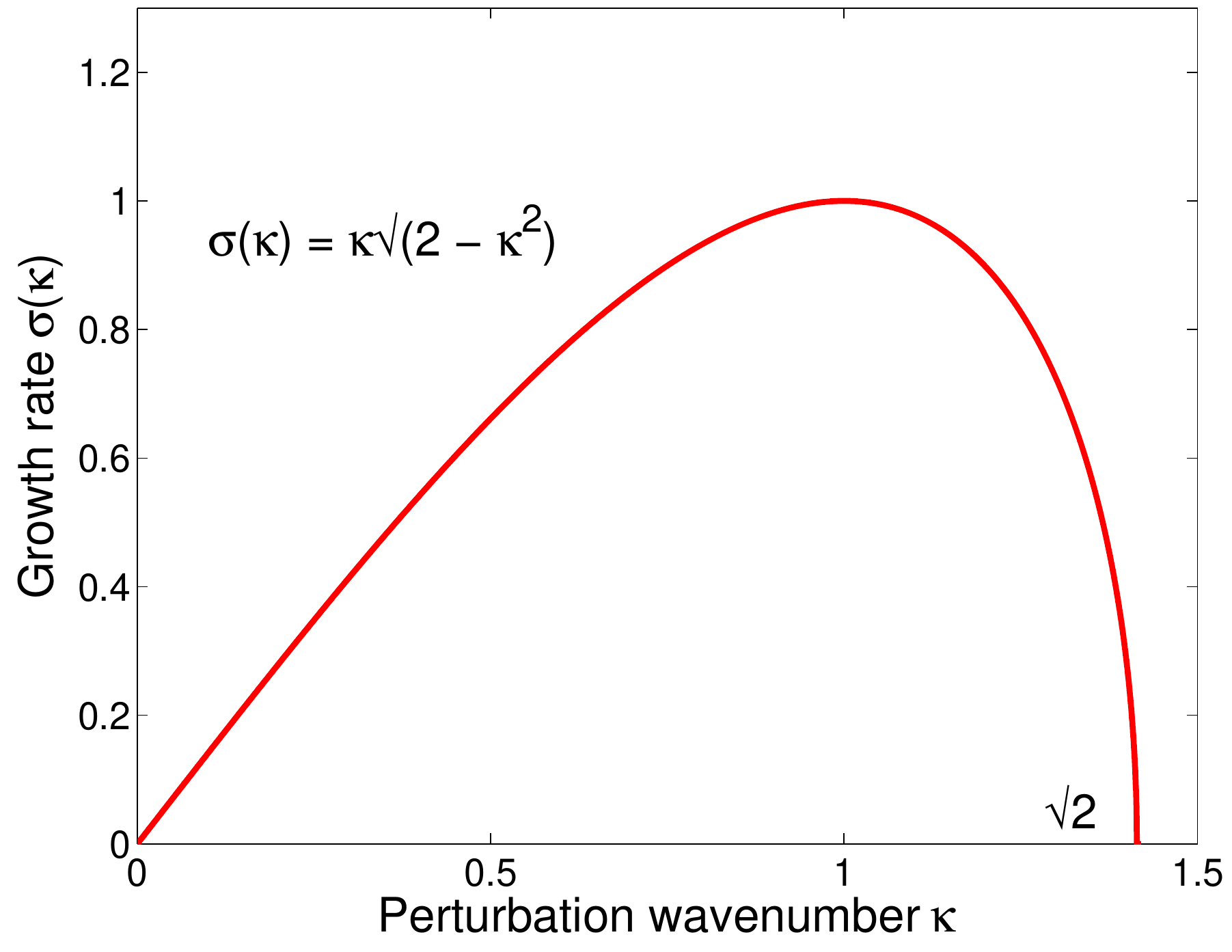}
\caption{Growth rate $\sigma$ as a function of perturbation wavenumber $\kappa$ for $r_{0} = \beta = \gamma = 1$.}	   \label{plot_growth_rate}
\end{figure}

We can write the solution of the linearized NLS equation as
\begin{equation}
\psi(\xi,\tau) = r_{0}\,e^{\,-i\,\gamma\,r_{0}^2\,\tau} [ 1 + e^{\sigma(\kappa)\,\tau}(B_{1} e^{i\,\kappa\,\xi} + B_{2} e^{-i\kappa\,\xi})],
\end{equation}
with $\sigma(\kappa) = \kappa\,\sqrt{2\,\beta\,\gamma\,r_{0}^{2} - \beta^{2}\,\kappa^{2}}$. Since this solution is obtained from the linearized NLS equation, it is only valid if amplitudes are small. When the time is increased, the amplitude increases exponentially and the linearized theory becomes invalid. Therefore, we cannot use the solution from the linearized equation to investigate the behavior of the maximum amplitude increase due to the Benjamin--Feir instability. Fortunately, there exists an exact solution to the NLS equation that describes the exact behavior of the wave profile and corresponds to the Benjamin--Feir instability.

\section{Soliton on Finite Background}

In this section, we investigate the relation between the maximum amplitude of a certain solution of the NLS equation and the perturbation wavenumber $\kappa$ in the instability interval. Suppose that $\widetilde{\psi}(\widetilde{\xi},\widetilde{\tau})$ satisfies the NLS equation, then by scaling the variables $\widetilde{\xi}$ and $\widetilde{\tau}$ and also $\widetilde{\psi}$ itself using the relation
\begin{equation}
\widetilde{\psi}(\widetilde{\xi},\widetilde{\tau}) = \frac{1}{r_{0}}\, \psi\left(\xi,\,\tau\right),
\end{equation}
we get the normalized NLS equation for $\widetilde{\psi}$. An exact solution, the so--called \textit{soliton on finite background} (SFB)\footnote[1]{This SFB sometimes also called the \textit{second most important solution} of the NLS equation. Another author~\cite{Osbo01} calls it the \textit{rogue wave solution}.} of the NLS equation is given by~\cite{Akhm97} :
\begin{equation}
\psi(\xi,\tau) := \frac{(\kappa^{2} - 1) \cosh (\sigma(\kappa)\,\tau) + \sqrt{\frac{2-\kappa^{2}}{2}}\cos (\kappa \xi) - i \sigma(\kappa) \sinh (\sigma(\kappa)\,\tau)} {\cosh (\sigma(\kappa)\,\tau) - \sqrt{\frac{2-\kappa^{2}}{2}} \cos (\kappa \xi)}\; e^{\,-i\,\tau}, \label{exact2}
\end{equation}
where $0 < \kappa < \sqrt{2}$ and $\sigma(\kappa) = \kappa \sqrt{2-\kappa^{2}}$. The formula for the SFB in non-normalized form can be found in Appendix~\ref{A}. \\
Now consider the SFB in normalized form~(\ref{exact2}). For $\kappa = 1$, we have
\begin{equation}
\psi(\xi,\tau) := \frac{\cos \xi - i \sqrt{2} \sinh \tau }{\sqrt{2} \cosh \tau - \cos \xi}\;e^{\,-i\,\tau}. 	\label{exact1}
\end{equation}
Figure~\ref{plot_SFB1} shows the plot of $\left|\psi \right|$ from~(\ref{exact1}) as a function of $\xi$ and $\tau$. Note that $\psi(\xi,\tau)$ is a $2\pi$--periodic function with respect to $\xi$ variable.
\begin{figure}[h]
\centering 
\includegraphics[width = 0.7\textwidth]{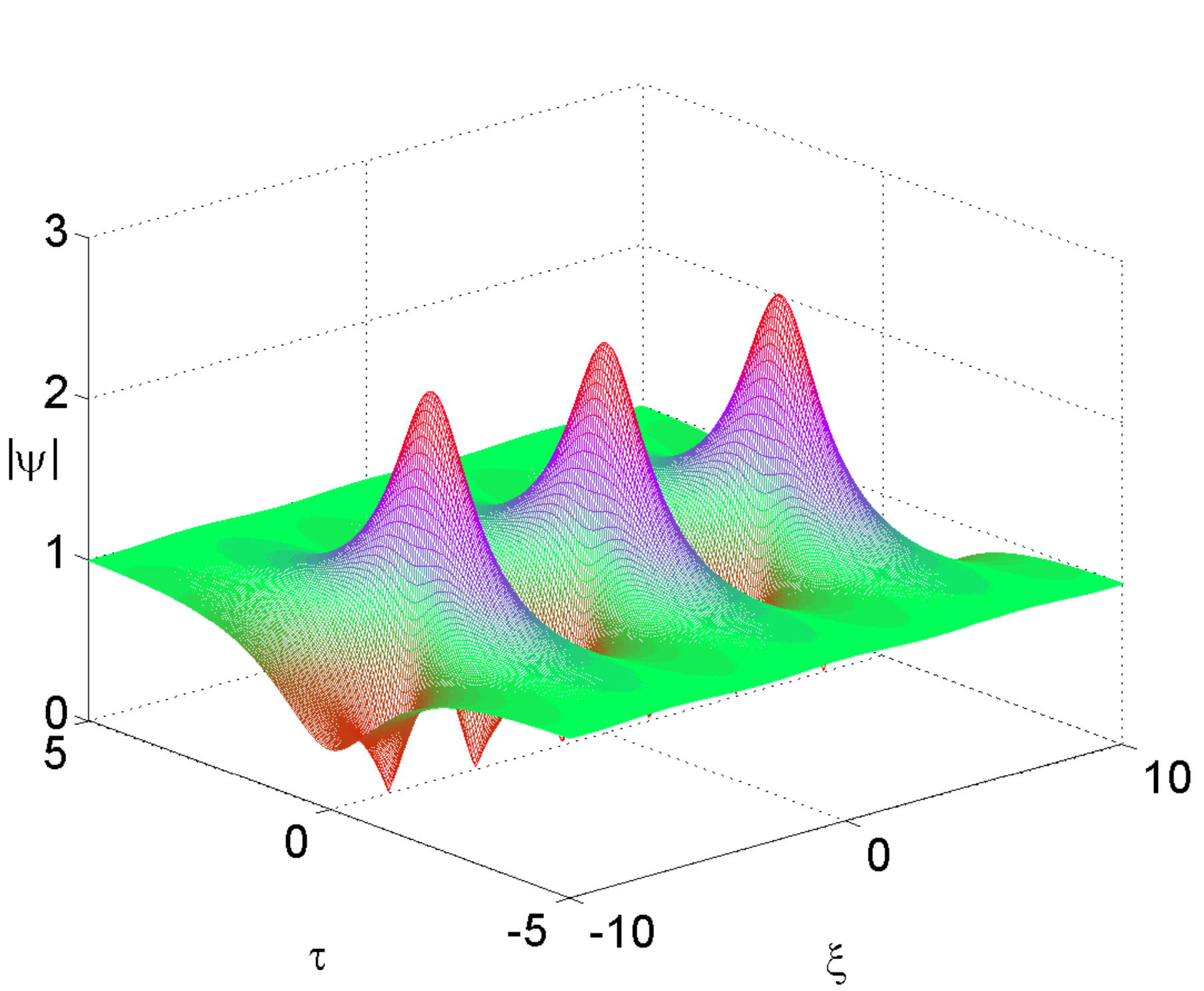}
\caption{Plot of $\left|\psi \right|$ as a function of $\xi$ and $\tau$ for perturbation wavenumber $\kappa = 1.$} 			  \label{plot_SFB1}
\end{figure}

In the following, we analyze the solution~(\ref{exact2}) in detail. The behavior for this SFB as $\tau \rightarrow \pm \infty$ is given by
\begin{equation}
\lim_{\tau \rightarrow \infty} \; \left|\psi(\xi,\tau)\right| =
\sqrt{(\kappa^2 -1)^2 + \kappa^2\,(2 - \kappa^2)} = 1 .
\end{equation}
Because of this property, the solution~(\ref{exact2}) is called as SFB. For a `normal' soliton, the elevation vanishes at infinity: the `normal' soliton is exponentially confined. For SFB, the solution is a similar elevation on top of the finite (nonzero) background level, here the normalized value~1.

Write the solution in the form $\psi(\xi,\tau) = u(\xi,\tau)\,e^{\,-i\,\tau}$, where $u(\xi,\tau)$ where $u$ describes the amplitude and the exponential part expresses oscillations in time. Let us investigate the behavior of $u(\xi,\tau)$ in time. For that consider the limiting behavior of $\partial_{\tau}u$ as $\tau \rightarrow \pm \infty$. We have
\begin{equation}
\frac{\partial u}{\partial \tau}(\xi,\tau) = \frac{-i\sigma^{2}(\kappa) - \sigma(\kappa)\,\sqrt{\frac{2-\kappa^2}{2}} \cos (\kappa \, \xi)\left[\kappa^2 \,\sinh \left(\sigma(\kappa)\,\tau \right) - i\,\sigma(\kappa)\,\cosh \left(\sigma(\kappa)\,\tau\right) \right]}{\left[\cosh \left(\sigma(\kappa)\,\tau\right) - \sqrt{\frac{2-\kappa^2}{2}} \cos \left(\kappa\,\xi\right)\right]^{2}}.
\end{equation}
If $\xi \neq \frac{\pi}{2\,\kappa}$, then
\begin{equation}
\frac{\partial u}{\partial \tau} \approx - 2\,(\kappa^2 - i\,\sigma(\kappa)) e^{-\sigma(\kappa)\,\tau} \quad \textmd{if} \; \tau \gg 0,
\end{equation}
and
\begin{equation}
\frac{\partial u}{\partial \tau} \approx 2\,(\kappa^2 + i\,\sigma(\kappa)) e^{\,\sigma(\kappa)\,\tau} \quad \textmd{if} \; \tau \ll 0.
\end{equation}
If $\xi = \frac{\pi}{2\,\kappa}$, then
\begin{equation}
\frac{\partial u}{\partial \tau} \approx -4\,i\,\sigma^2(\kappa) \,e^{-2\,\sigma(\kappa)\,\tau} \quad \textmd{for} \; \tau \gg 0,
\end{equation}
and
\begin{equation}
\frac{\partial u}{\partial \tau} \approx -4\,i\,\sigma^2(\kappa)\, e^{\,2\,\sigma(\kappa)\,\tau} \quad \textmd{for} \; \tau \ll 0.
\end{equation}

\begin{figure}[h]
\centering 
\includegraphics[width = 0.7\textwidth]{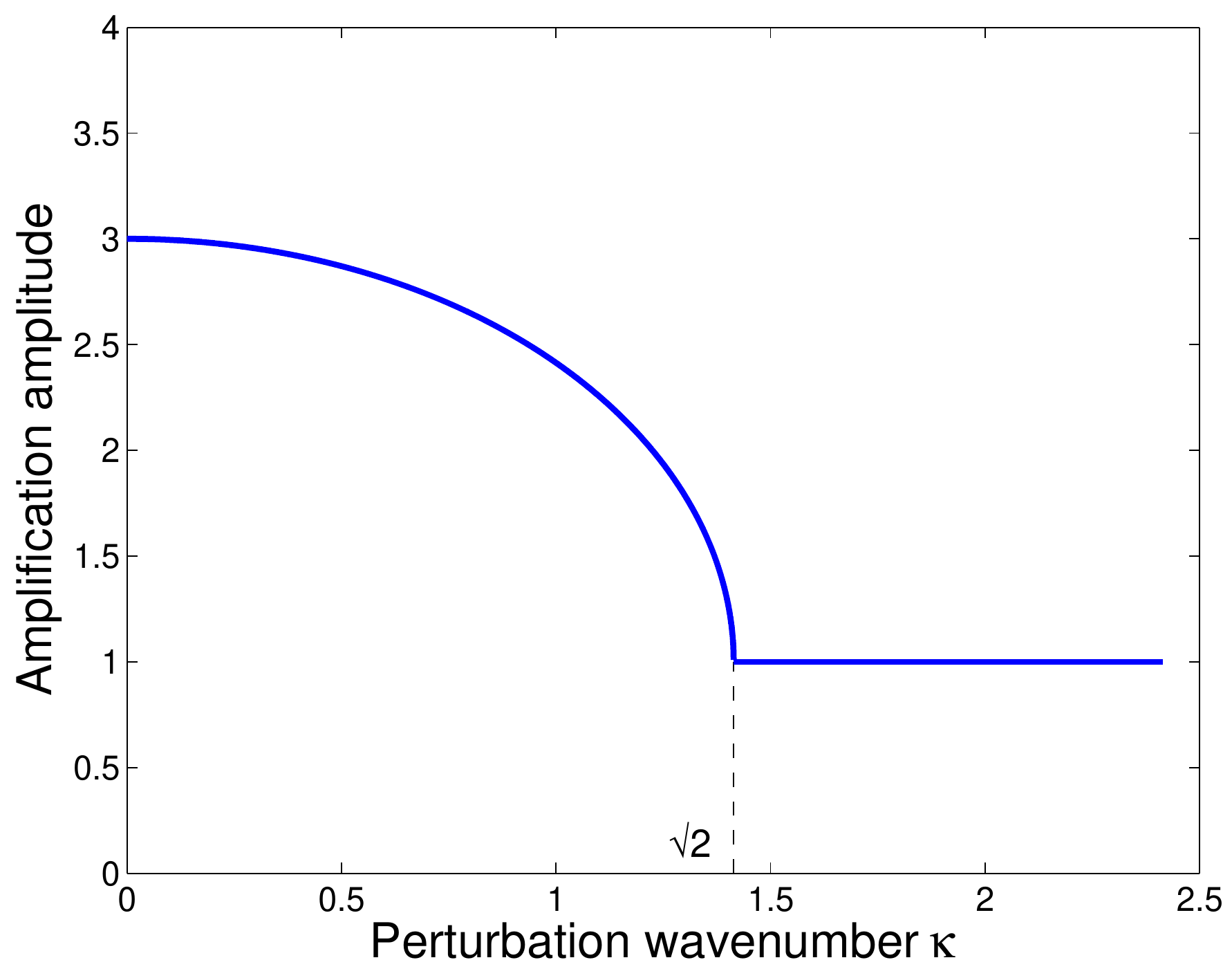}
\caption{The plot of amplitude amplification as a function of the perturbation wavenumber~$\kappa$.} 		\label{plot_amplification_amplitude}
\end{figure}
Next, let us find the relation between the maximum amplitude of the exact solution (\ref{exact2}) with perturbation wavenumber $\kappa$, where $0 < \kappa < \sqrt {2}$. The maximum value of the complex amplitude is at $\xi \equiv 0\;(\textmd{mod}\;2\pi)$ and when $\tau = 0$. So, we have
\begin{equation}
|\psi|_{\textmd{\scriptsize{max}}} = |\psi(0,0\,;\,\kappa)| = \frac{\kappa^{2} - 1 + \sqrt{1 - \frac{1}{2} \kappa^{2}}}{1 - \sqrt{1 - \frac{1}{2} \kappa^{2}}}.
\end{equation}
Using the approximation $\sqrt{1+a} \approx 1 + \frac{1}{2} a$ for small $a$, and apply it to our formula ($a = - \frac{1}{2} \kappa^{2}$), we obtain
\begin{eqnarray}
\lim_{\kappa \rightarrow 0}\;|\psi(0,0;\kappa)| = 3.
\end{eqnarray}
As a result, the maximum factor of the amplitude amplification is
\begin{equation}
\lim_{\kappa \rightarrow 0}\;\frac{|\psi(0,0;\kappa)|}{\lim_{\tau \rightarrow \pm \infty}|\psi(\xi,\tau;\kappa)|} = \frac{3}{1} = 3.
\end{equation}
Figure~\ref{plot_amplification_amplitude} shows the plot of the maximum amplitude $|\psi|_{\textmd{\scriptsize{max}}}$ as a function of perturbation wavenumber~$\kappa$.

To summarize, Figure~\ref{plotgabung} compares the plot of the dispersion relation $\Omega$ and its quadratic approximation (leads to the NLS equation), the growth rate $\sigma$ (which is related to Benjamin--Feir instability), and the maximum amplitude of $\psi$ as function of wavenumber $k$.\\
\begin{figure}[h]
\centering
\includegraphics[width = 0.7\textwidth]{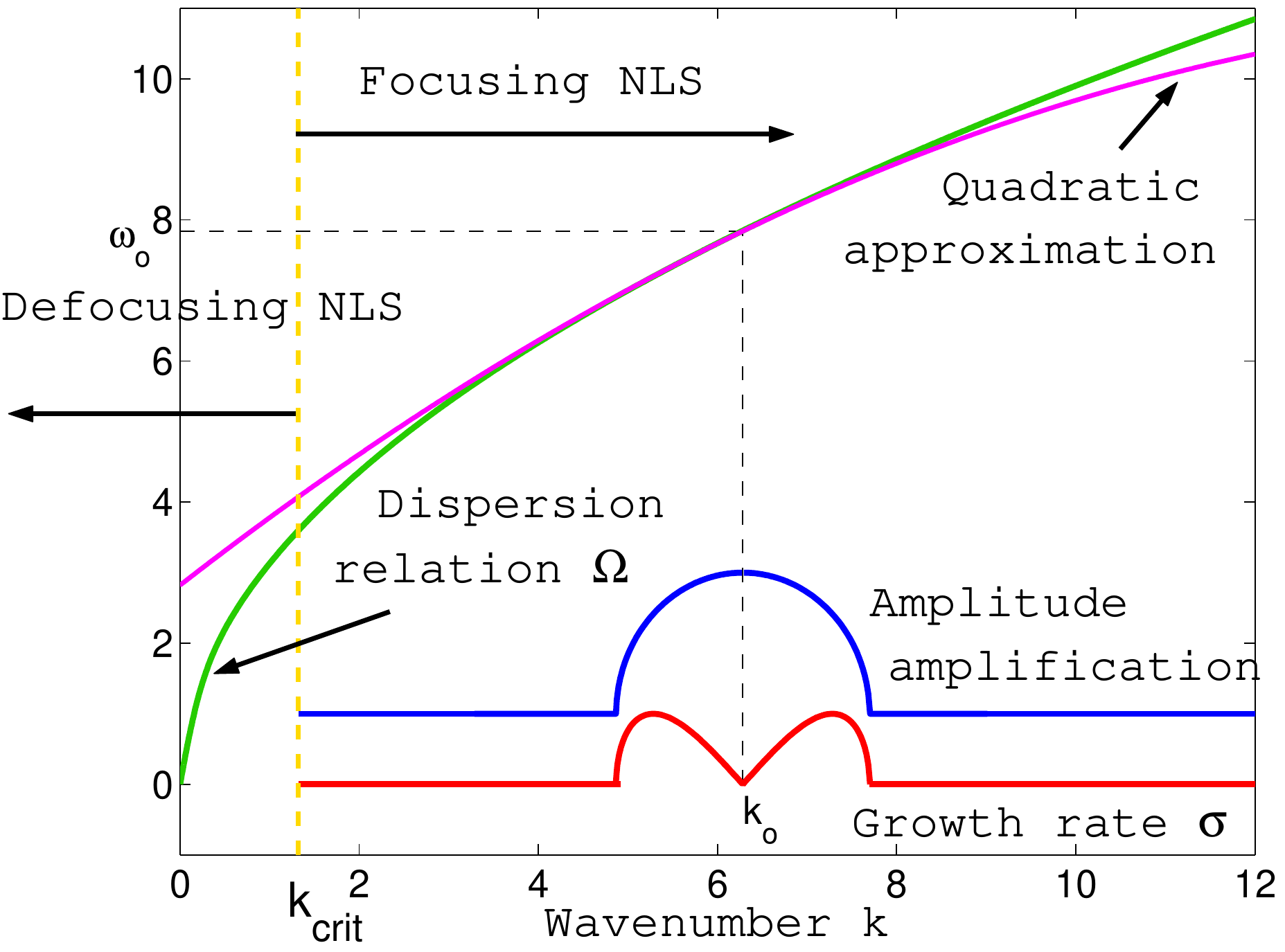}
\caption{Dispersion relation $\Omega$ and its quadratic approximation, growth rate~$\sigma$ (Benjamin--Feir instability), and maximum amplitude of $\psi$ as functions of wavenumber~$k$. The case corresponds to $k_{0} = 2\,\pi$.} 							\label{plotgabung}
\end{figure}

\chapter{Wave Dislocation and Phase Singularity} \label{WD_PS}

\section{Wave Dislocation and Phase Singularity}

Consider the NLS equation~(\ref{generalNLS}) in the moving frame of reference. By transforming into the original spatial and temporal variables, we have the NLS equation in the non-moving frame of reference:
\begin{equation}
\bigskip \frac{\partial \psi}{\partial t} + V_{0}\,\frac{\partial \psi}{\partial x} + i\,\beta\,\frac{\partial^{2}\psi}{\partial x^{2}} + i\,\gamma\,\left| \psi\right| ^{2}\psi = 0, \qquad \qquad V_{0}, \beta, \gamma \in \mathbb{R}, 			\label{NLS}
\end{equation}
where $V_{0} = \Omega^{'}\!(k_{0})$, $\beta$ and $\gamma$ as defined in~(\ref{beta}) and~(\ref{gamma}), respectively. We write the solution of the NLS equation in the polar form $\psi(x,t) = a(x,t)\,e^{i\,\theta(x,t)}$, where $a(x,t)$ is the \textit{real-valued amplitude} and  $\theta(x,t)$ is the \textit{real-valued phase}. The physical wave profile is given by the elevation of the surface wave
\begin{eqnarray}
\eta(x,t) &=& \psi(x,t)\,e^{\,i\,(k_{0}x - \Omega(k_{0}) t)} + \textmd{c.c. \,(complex conjugate)} \nonumber \\
          &=& 2\,a(x,t)\,\cos \Phi(x,t),
\end{eqnarray}
where $\Omega(k)$ is the \textit{linear dispersion relation} and $\Phi(x,t) = \theta(x,t) + k_{0}x - \omega_{0} t$ is the \textit{real-valued phase} related to the physical wave profile.

We define the local wavenumber $k(x,t)$ and local frequency $\omega(x,t)$ related to the phase $\Phi(x,t)$:
\begin{eqnarray}
k(x,t) &=& \frac{\partial \Phi}{\partial x} = \frac{\partial \theta}{\partial x} + k_{0},  						\label{local_wavenumber} \\
\omega(x,t) &=& - \frac{\partial \Phi}{\partial t} = - \frac{\partial \theta}{\partial t} + \omega_{0}. 		\label{local_frequency}
\end{eqnarray}
The complete expressions for these quantities can be found in Appendix~\ref{A}.

Figure~\ref{physicalSFB} shows the plots of two physical wave profiles corresponding to SFB of the NLS equation for two distinct perturbation wavenumbers~$\kappa$.
\begin{figure}[h]
\begin{center}
\hspace*{-0.1cm}
\includegraphics[width = 0.45\textwidth]{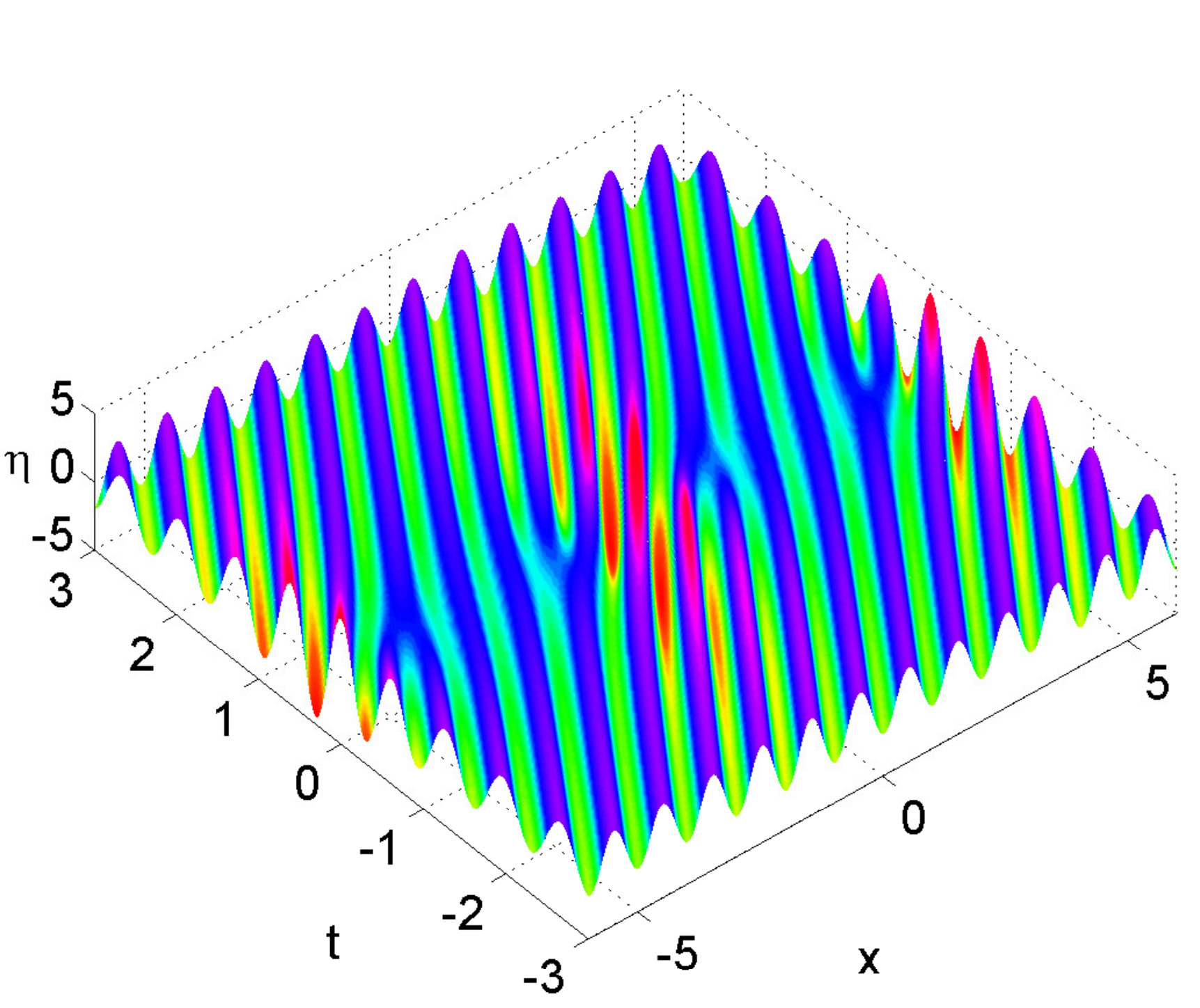}		\hspace{0.5cm}
\includegraphics[width = 0.45\textwidth]{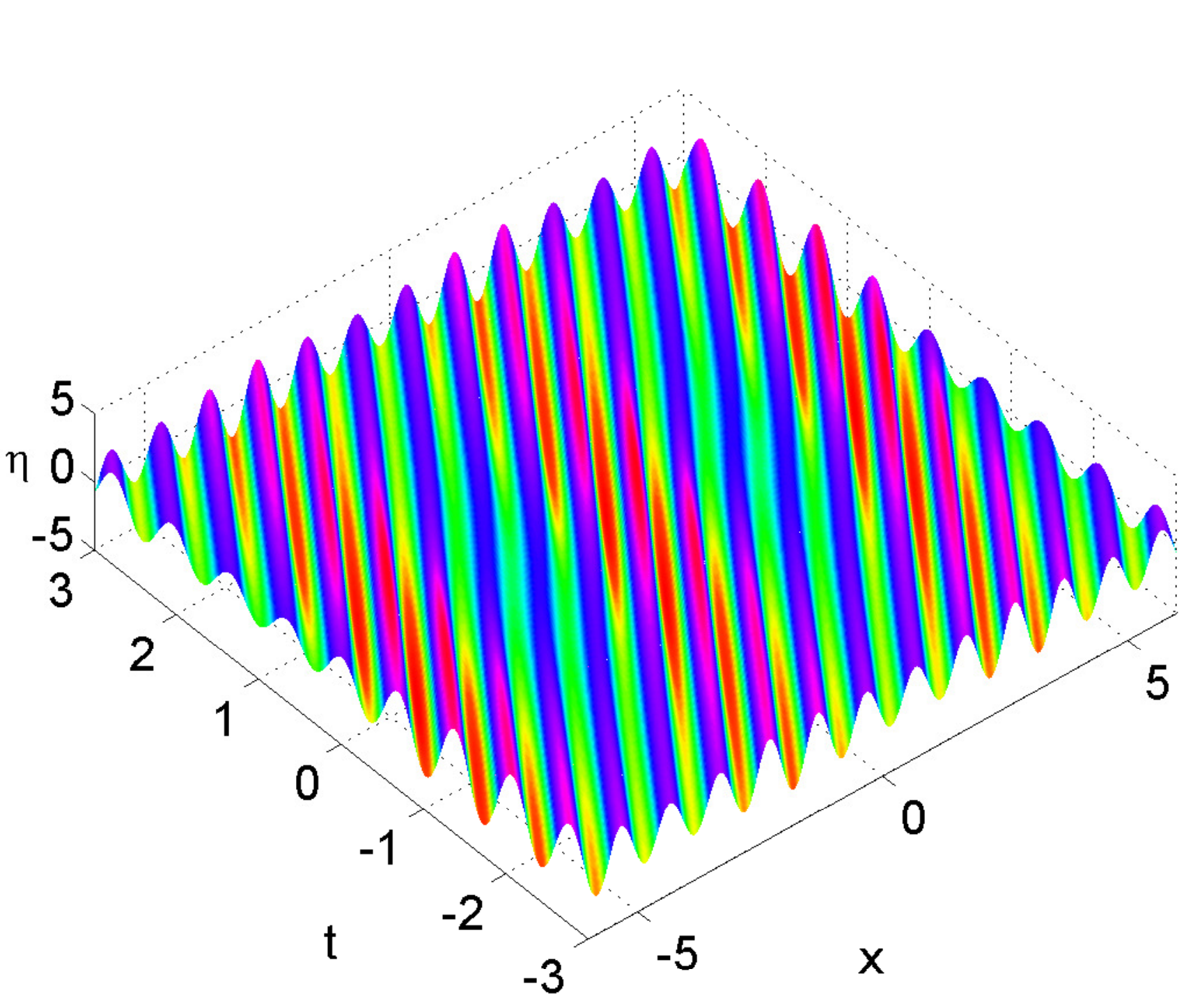}
\end{center}
\caption{Three-dimensional plots of the physical wave group related to the SFB of NLS for perturbation wavenumber $\kappa = 1$ (left) and for $\kappa = 1.33$ (right). The physical wave profile for $0 < \kappa \leq \kappa_{\textmd{\scriptsize crit}}$ has \textit{wave dislocation} phenomenon. There is no wave dislocation when $\kappa_{\textmd{\scriptsize crit}} < \kappa < \sqrt{2}$.}  \label{physicalSFB}
\end{figure}

The critical perturbation wavenumber, $\kappa_{\textmd{\footnotesize crit}}$, is defined as a value when the real-valued amplitude has zeros for the last time when $\kappa$ is increasing. It is found that the real-valued amplitude has zeros for $0 < \kappa \leq \kappa_{\textmd{\footnotesize crit}}$ and strictly positive for perturbation wavenumber $\kappa_{\textmd{\footnotesize crit}} < \kappa < \sqrt{2}$.\footnote[1]{The same observation also can be found in \textit{steepness}, defined as the multiplication of real-valued amplitude $a$ and the local wavenumber $k$. Dr. Andonowati and her PhD students in Bandung, Indonesia found the same phenomenon in the steepness.}
\begin{proposition} \nonumber
The critical perturbation wavenumber $\kappa_{\textmd{\footnotesize crit}} =  \sqrt{\frac{3}{2}}$.
\end{proposition}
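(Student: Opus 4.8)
The plan is to locate the zeros of the real-valued amplitude directly from the explicit formula~(\ref{exact2}). Writing $\psi = a\,e^{i\theta}$ with $a$ real-valued and continuous, one has $|\psi| = |a|$, so the zeros of $a$ are exactly the points where $\psi$ vanishes. Since the denominator of~(\ref{exact2}) satisfies $\cosh(\sigma(\kappa)\tau) - \sqrt{\tfrac{2-\kappa^2}{2}}\cos(\kappa\xi) \ge 1 - \sqrt{\tfrac{2-\kappa^2}{2}} > 0$ for every $0 < \kappa < \sqrt2$ (because $\sqrt{(2-\kappa^2)/2} < 1$ there), these are precisely the common zeros of the real and imaginary parts of the numerator.

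First I would observe that the imaginary part of the numerator of~(\ref{exact2}) is $-\sigma(\kappa)\sinh(\sigma(\kappa)\tau)$, which depends on $\tau$ alone and, since $\sigma(\kappa)\neq 0$ for $0<\kappa<\sqrt2$, vanishes if and only if $\tau = 0$. Hence any zero of $a$ must occur at $\tau = 0$; this recovers the statement quoted in the text that wave dislocations are tied to the real amplitude having zeros at $t=0$.

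Next, setting $\tau = 0$ in the real part of the numerator gives the condition $(\kappa^2 - 1) + \sqrt{\tfrac{2-\kappa^2}{2}}\,\cos(\kappa\xi) = 0$, i.e. $\cos(\kappa\xi) = (1-\kappa^2)\big/\sqrt{(2-\kappa^2)/2}$. As $\kappa > 0$ and $\xi$ ranges over $\mathbb{R}$, the quantity $\cos(\kappa\xi)$ attains every value in $[-1,1]$, so this equation is solvable in $\xi$ precisely when the right-hand side has modulus at most $1$, that is $(1-\kappa^2)^2 \le \tfrac{2-\kappa^2}{2}$. Substituting $u = \kappa^2$ and clearing denominators turns this into the quadratic inequality $2u^2 - 3u \le 0$, i.e. $u(2u-3)\le 0$; since $u > 0$ this is equivalent to $u \le \tfrac32$, that is $\kappa \le \sqrt{3/2}$.

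Finally I would read off the conclusion: the real-valued amplitude has zeros for every $0 < \kappa \le \sqrt{3/2}$ and is strictly positive for $\sqrt{3/2} < \kappa < \sqrt2$, and since the solvability region in $\kappa$ is an interval, $\sqrt{3/2}$ is genuinely the last value, as $\kappa$ increases, at which zeros occur; hence $\kappa_{\textmd{\footnotesize crit}} = \sqrt{3/2}$ by the definition in the text. The only points requiring a little care are the strict positivity of the denominator and keeping track of which root of the quadratic is relevant (and checking the boundary case $\kappa = \sqrt{3/2}$, where the right-hand side equals $-1$ and a zero still exists); the computation itself is elementary, so I do not anticipate a serious obstacle.
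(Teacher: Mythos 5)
Your proposal is correct, and it reaches $\kappa_{\textmd{\footnotesize crit}} = \sqrt{3/2}$ by a genuinely different route from the paper. The paper works at $t=0$ from the outset (taking as given the earlier observation that dislocations are tied to zeros of the amplitude at $t=0$), writes $a(x,0)$ as $r_{0}|f|/g$, differentiates the quotient to locate the interior minima at $\cos(\cdot) = -1$, and then imposes $a(x_{0},0)=0$ there, which yields $\kappa^{2}-1 = \sqrt{1-\tfrac{1}{2}\kappa^{2}}$ and hence $\kappa^{2}=3/2$; in effect it finds the boundary case where the minimum of $|f|$ just touches zero. You instead characterize \emph{all} zeros of $\psi$ directly: since the denominator of~(\ref{exact2}) is bounded below by $1-\sqrt{(2-\kappa^{2})/2}>0$, zeros of the amplitude are common zeros of the real and imaginary parts of the numerator; the imaginary part $-\sigma(\kappa)\sinh(\sigma(\kappa)\tau)$ forces $\tau=0$, and the real part then reduces to a solvability condition $\bigl|(1-\kappa^{2})\big/\sqrt{(2-\kappa^{2})/2}\bigr|\le 1$, i.e.\ $\kappa^{2}\le 3/2$. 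What your approach buys is twofold: it \emph{derives} rather than assumes that zeros can only occur at $\tau=0$, and it replaces the derivative computation (with its attendant care about differentiating $|f|$ where $f$ vanishes --- precisely the points of interest) by the elementary observation that $\cos$ has range $[-1,1]$, which cleanly gives the full interval $0<\kappa\le\sqrt{3/2}$ of existence of zeros, not just the boundary value. The paper's route, on the other hand, additionally identifies the spatial location $x_{0}$ of the dislocation, which is used elsewhere in the discussion (e.g.\ the dislocation at $x=\pi/2$ for $\kappa=1$). Both computations are consistent, and your handling of the boundary case $\kappa=\sqrt{3/2}$ (where the required value of $\cos(\kappa\xi)$ is exactly $-1$) matches the paper's conclusion that the amplitude still has zeros at $\kappa=\kappa_{\textmd{\footnotesize crit}}$.
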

\begin{proof} \\
In this proof we will derive the value for $\kappa_{\textmd{\footnotesize crit}}$. This critical value is attained when $a(x,0) = 0$ and at the same position the local minimum is also reached. Therefore, we have to find $x_{0}$ such that $a(x_{0},0) = 0$ and $\partial_{x}a(x_{0},0) = 0$. The real-valued amplitude $a$ at $t = 0$ reads
\begin{eqnarray}
a(x,0) = r_{0}\frac{\left|\kappa^2 - 1 + \sqrt{1 - \frac{1}{2}\kappa^2} \cos\left(\kappa\,r_{0}\,\sqrt{\frac{\gamma}{\beta}}\,x\right)\right|}{1 - \sqrt{1 - \frac{1}{2}\kappa^2} \cos\left(\kappa\,r_{0}\,\sqrt{\frac{\gamma}{\beta}}\,x\right)}.   \label{a(x,0)}
\end{eqnarray}
By writing the numerator part as $|f(x)| = \sqrt{[f(x)]^{2}}$ and the denominator part as $g(x)$, then the first derivative with respect to $x$ reads
\begin{equation}
\frac{\partial a}{\partial x}(x,0) = r_{0}\frac{f(x)\,[f'(x)g(x) - f(x)g'(x)]}{\sqrt{[f(x)]^2}\,g^2(x)},
\end{equation}
where $f(x) \neq 0$ and $g(x) \neq 0$. For $f(x) > 0$, the fraction $f(x)/|f(x)| = 1$; and for $f(x) < 0$, the fraction $f(x)/|f(x)| = -1$. Therefore, the first derivative above turns into
\begin{equation}
\frac{\partial a}{\partial x}(x,0) = \pm r_{0}\frac{f'(x)g(x) - f(x)g'(x)}{g^2(x)},
\end{equation}
To find the extremum means to find zeros of the first derivative of the steepness, which leads into $f'(x)g(x) = f(x)g'(x)$. By finding zeros for this relation, we have $\kappa^2\,\sin\left(\kappa\,r_{0}\,\sqrt{\frac{\gamma}{\beta}}\,x \right) = 0$, which implies
\begin{equation}
x_{0} = \frac{n\,\pi}{\kappa\,r_{0}}\,\sqrt{\frac{\beta}{\gamma}}, \quad  \quad n \in \mathbb{Z}.
\end{equation}
The minimum values are reached when $n$ is odd, and thus the zeros turn into
\begin{equation}
x_{0} = \frac{(2m+1)\,\pi}{\kappa\,r_{0}}\,\sqrt{\frac{\beta}{\gamma}}, \quad \quad m \in \mathbb{Z}.
\end{equation}
Now substitute $x_{0}$ into~(\ref{a(x,0)}) to obtain $a(x_{0},0) = 0$:
\begin{equation}
a(x_{0},0) = r_{0}\frac{\left|\kappa^2 - 1 + \sqrt{1 - \frac{1}{2}\kappa^2} \cos\left[(2m+1)\,x\right]\right|}{1 - \sqrt{1 - \frac{1}{2}\kappa^2} \cos\left[(2m+1)\,x\right]} = 0. 				\label{a(x_{0},0)}
\end{equation}
For $m \in \mathbb{Z}$, $\cos[(2m+1)\,x] = -1$, so the denominator in~(\ref{a(x_{0},0)}) is nonzero and remains to solve
\begin{equation}
\kappa^2 - 1 = \sqrt{1 - \frac{1}{2}\kappa^2}.
\end{equation}
By squaring the equation and solving the equation for $\kappa$, we obtain $\kappa_{\textmd{\footnotesize crit}} = \sqrt{\frac{3}{2}}$.
\end{proof}

The corresponding real-valued amplitude for the physical wave profile SFB of NLS with different perturbation wavenumber $\kappa$, including $\kappa_{\textmd{\footnotesize crit}}$ (the red plot) can be seen in Figure~\ref{steepness_plots}. Notice also that the longer perturbation wavelength (corresponding to a smaller value of $\kappa$), the higher real-valued amplitude we have, leads to greater amplitude amplification factor. The wave profile for short perturbation wavelength (corresponding to $\sqrt{3/2} < \kappa < \sqrt{2}$), does not give any wave dislocation but it is similar to a periodic wave train. The amplitude amplification factor for $\kappa_{\textmd{\footnotesize crit}}$ is~2, as visible in Figure~\ref{steepness_plots} (bottom left).
\begin{figure}[h]
\begin{center}
\hspace*{-0.1cm}
\includegraphics[width = 0.45\textwidth]{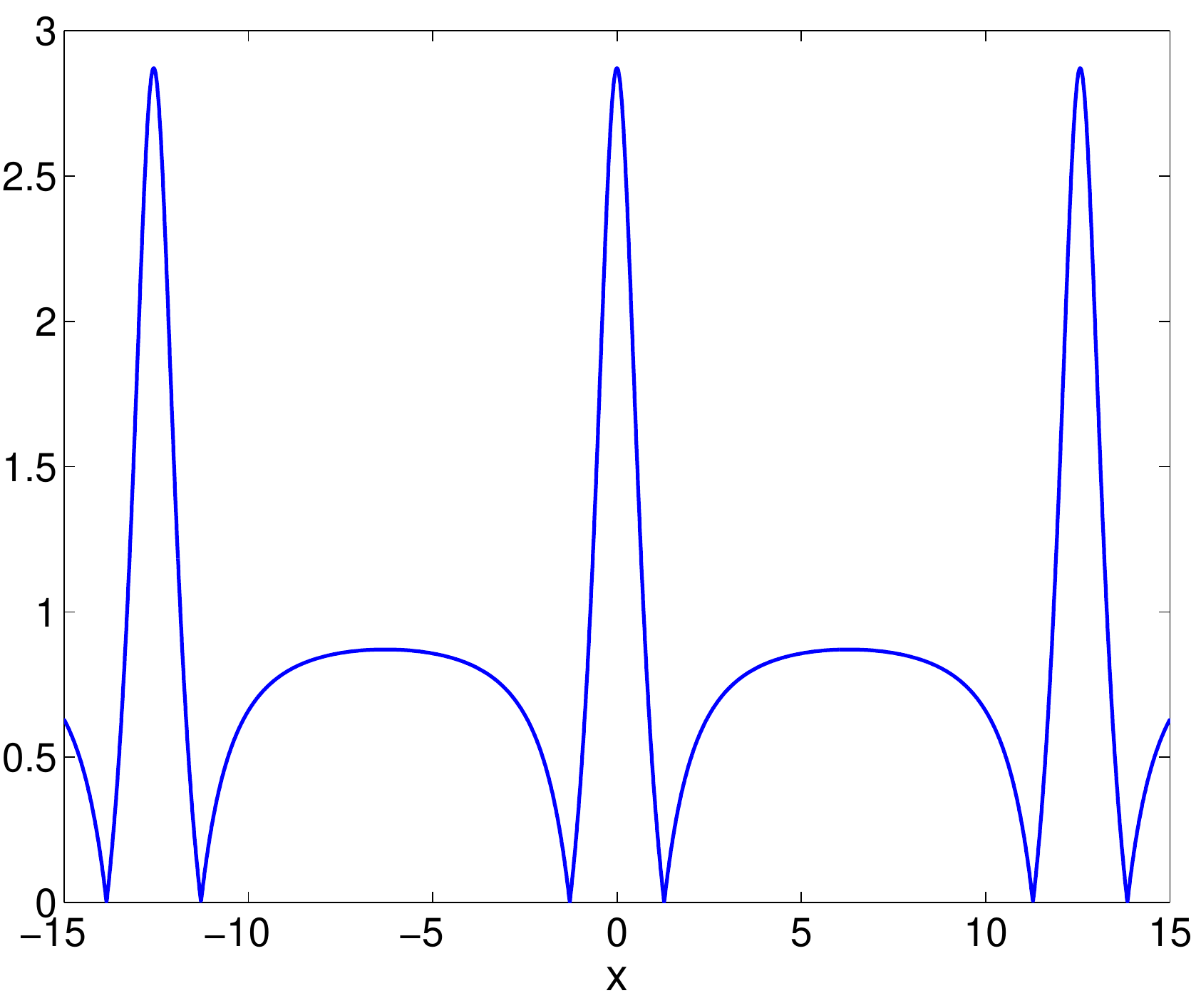}    	  	\hspace{0.5cm}
\includegraphics[width = 0.45\textwidth]{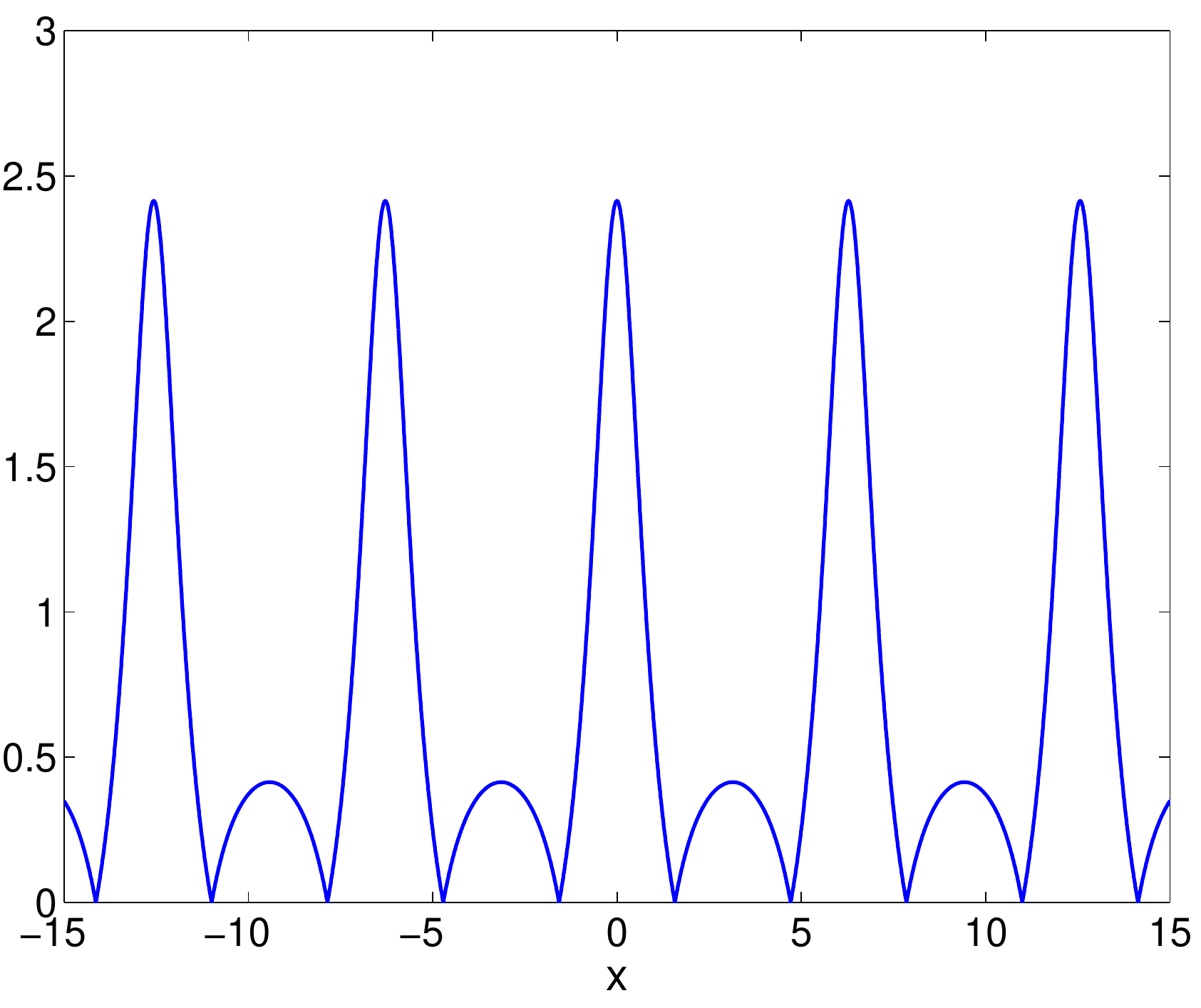}
\includegraphics[width = 0.45\textwidth]{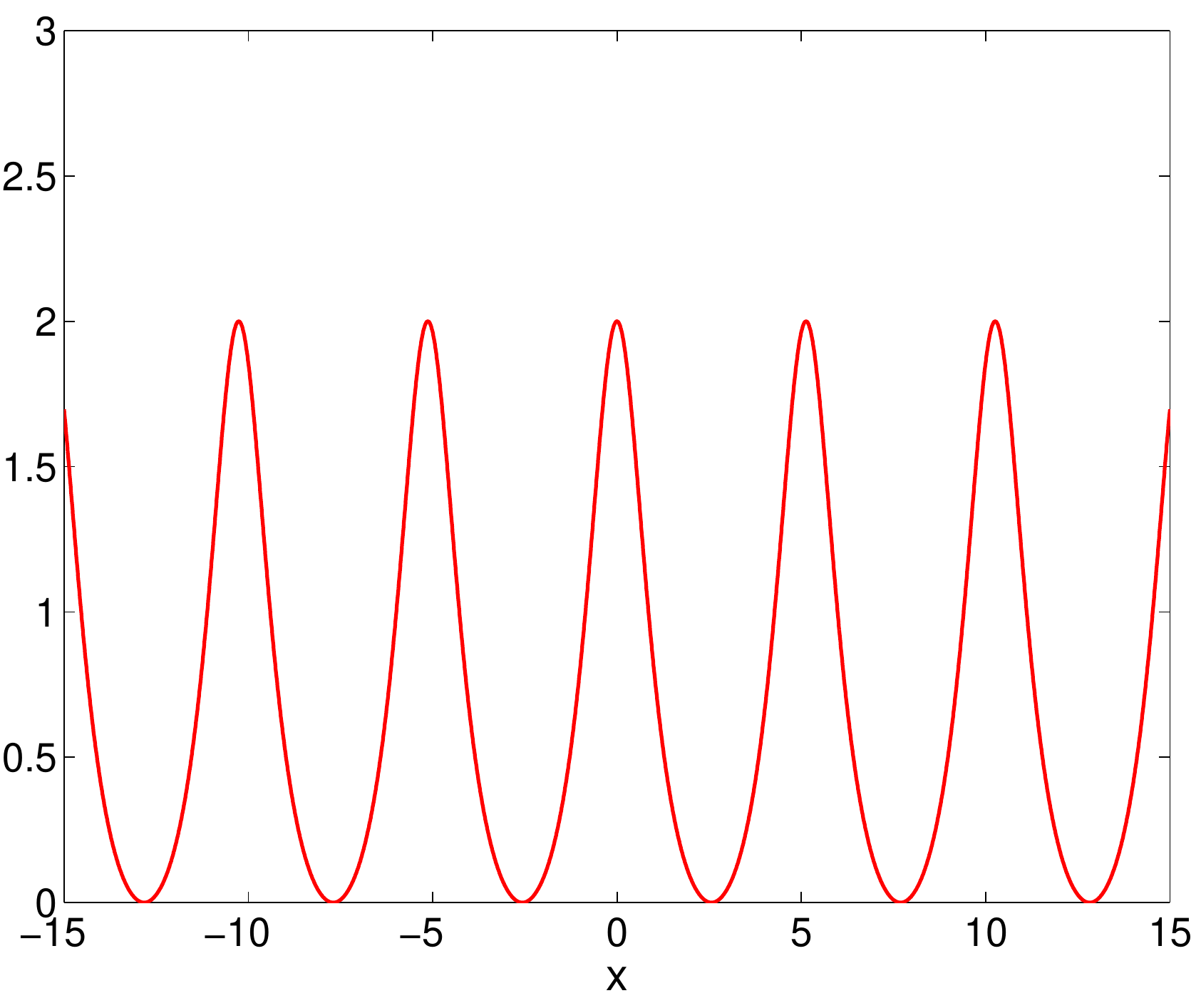}			\hspace{0.5cm}	
\includegraphics[width = 0.45\textwidth]{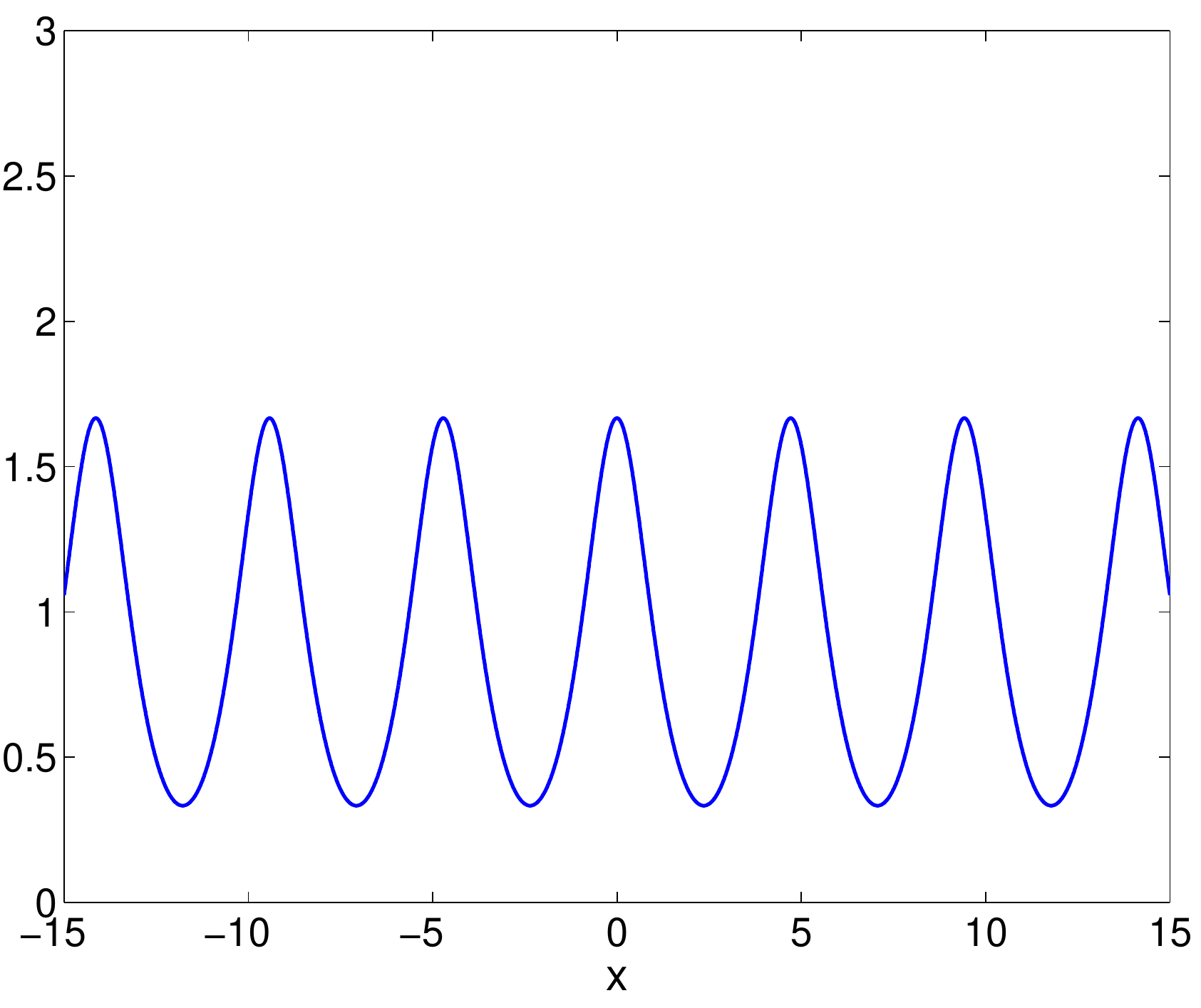}
\end{center}
\caption{Real-valued amplitude plots $a(x,0)$ of the physical wave group related to the SFB of NLS for perturbation wavenumber $\kappa = 0.5$ (top left) $\kappa = 1$ (top right), $\kappa_{\textmd{\footnotesize crit}} = \sqrt{3/2}$ (bottom left, red plot), and for $\kappa = 1.33$ (bottom right).}  				\label{steepness_plots}
\end{figure}

Now, observe again the plot of physical wave profile for $\kappa = 1$ in Figure~\ref{physicalSFB} (left). The plot illustrates what could be called \emph{wave compression caused by wave dislocation}. The (normalized) parameters are chosen such that the highest wave amplitude occurs at the center $x = t = 0$. The phenomenon is symmetric for mirroring through the center: $(x,t) \rightarrow (-x ,-t)$. First observe the wave dislocations at approximately $(x ,t) = (\pm \pi /2,0)$. This point is within an oval area of small amplitude waves; the oval approximately extends from $x = 1$ to $x = 5$, between $t = -1$ and $t = 1 $ (and mirrored). Then at the $x$-axis, in the interval $x \in (-1,1)$, large deformations are seen to form a steep wave with a single wave crest and two deep throughs. Along the $t$-axis, a series of increasingly larger amplitude (steeper) waves are seen when approaching the center from the negative time. Far away from the center, a regular pattern of straight phase fronts at each side of the $t$-axis represent a completely regular, periodic wave pattern; a detailed analysis in the next section (Figure~\ref{sfb_slices}) shows a phase shift of the pattern and periodic behavior in the $x$-direction.

The \textit{wave dislocation} phenomenon is happening when at a specific position and specific time two waves are merging into one wave or one wave is splitting into two waves. This phenomenon, as can be seen in Figure \ref{physicalSFB}, gives rise to large changes in the `local wavelength and frequency' which is best illustrated by introducing the phase-amplitude description and investigating the so-called \textit{dispersion plot}. With the definitions of local wavenumber (\ref{local_wavenumber}) and local frequency (\ref{local_frequency}), the evolution for these two quantities can be investigated in the dispersion plane of frequency versus wavenumber. In Figure \ref{trajectories} , for various instants of time, the trajectories are shown parameterized by $x$, $x\rightarrow (k(x,t),\omega (x,t))$ in the left picture, and in the right picture, the trajectories at fixed positions are shown parameterized by time $t\rightarrow (k(x,t),\omega (x,t))$. Observe the large deviations of the local wavenumber and local frequency, and the singular behavior that corresponds to the point $(\pi /2,0)$, justifying the \textit{phase singularity} phenomenon. It happens when the local wavenumber and local frequency become unbounded when the amplitude vanishes, as shown in the dispersion plane trajectories. (See Figure~\ref{trajectories}.) \newline
\begin{figure}[h]
\begin{center}
\includegraphics[width = 0.45\textwidth]{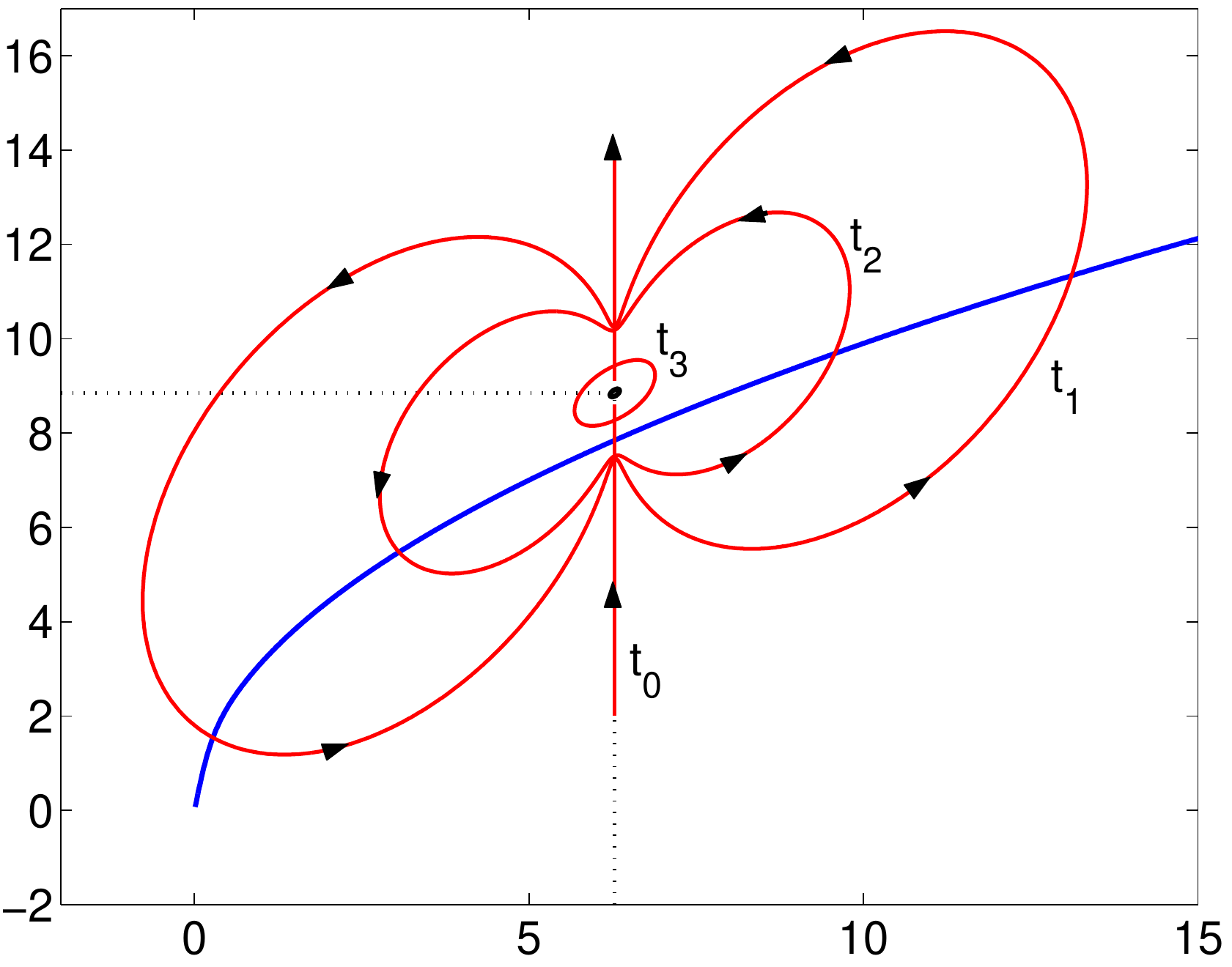}			\hspace{0.5cm}
\includegraphics[width = 0.45\textwidth]{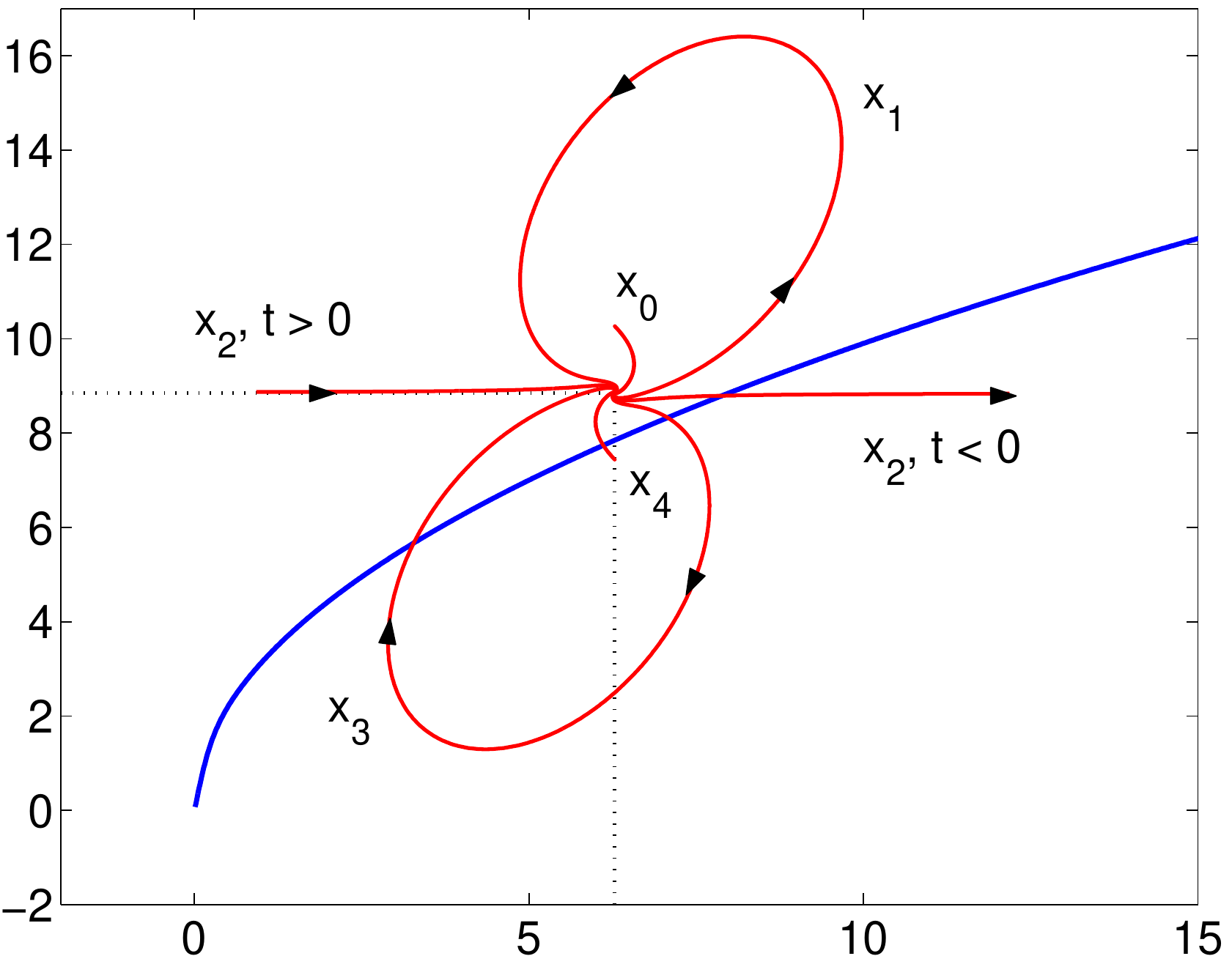}
\end{center}
\caption{Trajectories in the dispersion plane of local frequency vs. local wavenumber of the physical SFB. In the left picture, at various times the trajectories are shown parameterized by $x$; the \textit{phase singularity} is observed for $t_{0}=0$. In the right picture, the trajectories parameterized by~$t$ at various positions, with \textit{phase singularity} at $x_{2}=\pi /2$.} \label{trajectories}
\end{figure}

\section[Phase--Amplitude Equations for the NLS Equation]{Phase--Amplitude Equations for the NLS\\ Equation}

To find the \textit{phase--amplitude} (PA) equations for the NLS equation, we substitute $\psi(x,t) = a(x,t)\,e^{i\,\theta(x,t)}$ to equation (\ref{NLS}), where $a(x,t)$ and $\theta(x,t)$ being real-valued functions. We end up with the expression
\begin{eqnarray}
\left(\frac{\partial a}{\partial t} + V_{0}\,\frac{\partial a}{\partial x} - 2\,\beta\,\frac{\partial \theta}{\partial x}\,\frac{\partial a}{\partial x} - \beta\,\frac{\partial^{2} \theta}{\partial x^{2}}\,a \right) + &&\nonumber \\ 
i\,\left(\left[\frac{\partial \theta}{\partial t} + V_{0}\,\frac{\partial \theta}{\partial x} - \beta\,\left(\frac{\partial \theta}{\partial x} \right)^{2}\right]a + \beta\,\frac{\partial^{2} a}{\partial x^{2}} + \gamma\,|a|^{2}a \right) \!\!\! &=&  \!\!\!0.
\end{eqnarray}
This leads to PA equation:~\cite{Sule99}
\begin{equation}
\left\{ {\begin{array}{*{20}c}
P &:& \left[\partial_{t} \theta + V_{0}\,\partial_{x} \theta - \beta\,\left(\partial_{x} \theta \right)^{2}\right]a + \beta\,\partial_{x}^{2} a + \gamma\,|a|^{2}a = 0,  \\
A &:& \partial_{t} a + V_{0}\,\partial_{x} a - 2\,\beta\,\partial_{x} \theta\,\partial_{x} a - \beta\,a\,\partial_{x}^{2}\theta = 0. \end{array}} \right.
\end{equation}
From this formulation, we can see that there is a coupling between the phase and the amplitude.

Consider the phase equation and substitute $\partial_{t} \theta = \omega_{0} - \omega = \Omega(k_{0}) - \omega$, $\frac{\partial \theta}{\partial x} = k - k_{0}$, $V_{0} = \Omega^{'}\!(k_{0})$, and $\beta = -\frac{1}{2}\,\Omega^{''}\!(k_{0})$ into the equation, to obtain
\begin{equation}
\Omega(k_{0}) - \omega + \Omega^{'}\!(k_{0})\,(k - k_{0}) + \frac{1}{2}\,\Omega^{''}\!(k_{0})\,(k - k_{0})^{2} = - \left(\beta\,\frac{\partial_{x}^{2}a}{a} + \gamma\,|a|^{2} \right).
\end{equation}
Observe that the terms on the left-hand side are the Taylor expansion of the linear dispersion relation $\Omega(k)$ about $k_{0}$. Therefore, the phase equation can be written in the form:
\begin{equation}
\omega - \Omega(k) + \textmd{$\cal{O}$}(k^{3}) =\beta\,\frac{\partial_{x}^{2}a}{a} + \gamma\,|a|^{2}. 				\label{NDR}
\end{equation}
This relation is known as the \textit{nonlinear dispersion relation}; and the term $\beta\,\frac{\partial_{x}^{2}a}{a}$ in equation~(\ref{NDR}) is sometimes also called \textit{Fornberg--Whitham term} \cite{Infe90}.

The expression $\omega - \Omega(k)$ on the left-hand side of~(\ref{NDR}) can be interpreted as the `dispersion-mismatch': the amount in which the local wavenumber and frequency do not satisfy the linear dispersion relation. In the previous dispersion plots, this corresponds to points that do not lie on the graph of the linear dispersion relation. For the `Fornberg--Witham term', vanishing real amplitude can cause an infinite difference between local wavenumber $\omega$ and the linear dispersion relation $\Omega(k)$, except for pure sinusoidal real amplitude. Therefore, in the dispersion plane, this corresponds to points far away from the linear dispersion relation. In Figure~\ref{trajectories} (left), the vanishing real amplitude happens at $t_{0} = 0$ and $x = \pi/2$, the local wavenumber becomes $\pm \infty$. For vanishing the `Fornberg--Whitham term', $\beta\,\frac{\partial_{x}^{2}a}{a} = 0$, this corresponds to a point above the linear dispersion relation $(k_{0},\Omega(k_{0})+\gamma\,|a|^2)$. Compare this ordinate value with the special case of the `nonlinear dispersion relation' on page~\pageref{7}, formula~(\ref{solutiondependontime}). This happens when $t \rightarrow \pm \infty$, or equals to the unperturbed wave train, the plane--wave solution of the NLS equation. For this simplest type of solutions of NLS, with plane wavefronts, the dispersion mismatch is constant, leading to a standard nonlinear pendulum equation for the amplitude, leading to the steady-state solutions like the soliton and nonlinear periodic oscillations near the equilibrium value.

Now consider the amplitude equation
\begin{equation}
\frac{\partial a}{\partial t} + \left[V_{0} - 2\,\beta\,\frac{\partial \theta}{\partial x}\right] \,\frac{\partial a}{\partial x} - \beta\,\frac{\partial^2 \theta}{\partial x^2}\,a = 0.
\end{equation}
Multiplying the equation with $a$, substitute $V_{0} = \Omega'(k_{0})$, $\beta = -\frac{1}{2}\Omega''(k_{0})$, and $\partial_{x}\theta = k - k_{0}$, we have the following expression
\begin{equation}
a\,\partial_{t}a + [\Omega'(k_{0}) + \Omega''(k_{0})\,(k - k_{0})]\,a\,\partial_{x}a + \frac{\partial}{\partial x}\left[\Omega'(k_{0}) + \Omega''(k_{0})\,(k - k_{0}) \right]\,\frac{1}{2}a^2 = 0.
\end{equation}
By recognizing that the second and third terms have linear Taylor expansion expression $\Omega'(k_{0})$ about $k_{0}$, we end up with the formulation
\begin{equation}
\partial_{t}\left(a^{2}\right) + \partial_{x}\left[\Omega'(k)\,a^{2}\right] = 0.
\end{equation}
This expression represents the \textit{energy equation}, with `energy density' $E = a^2$.

For the physical wave profile of SFB, the local wavenumber and local frequency are not constant (see Appendix~\ref{A}) and in fact, produce  large variations in the dispersion mismatch. In the linear systems, or for small amplitude waves in the nonlinear model, this corresponds to the almost vanishing of the denominator of the first term in the right-hand side in~(\ref{NDR}), i.e., $\beta \partial _{x}^{2}a/a$, when the profile differs from a pure sinusoidal and shifts the phase lines near wave dislocations. This is clearly seen in the plot of Figure~\ref{sfb_slices} which shows at various instants of time near the critical time $t = 0$ the real wave profile and the amplitude (dashed). Notice that when $t = 0$, the amplitude vanishes at $x = \pi/2$, leading to the singular behavior observed in the dispersion plots above also (see Figure~\ref{trajectories}). Therefore, we have the phase singularity phenomenon. In the nonlinear case, the focusing effect in between wave dislocations is enforced by the term $\gamma E$ in the phase equation, leading to the large amplitude amplification.	\newline
\begin{figure}[h]
\begin{center}
\includegraphics[width = 0.6\textwidth]{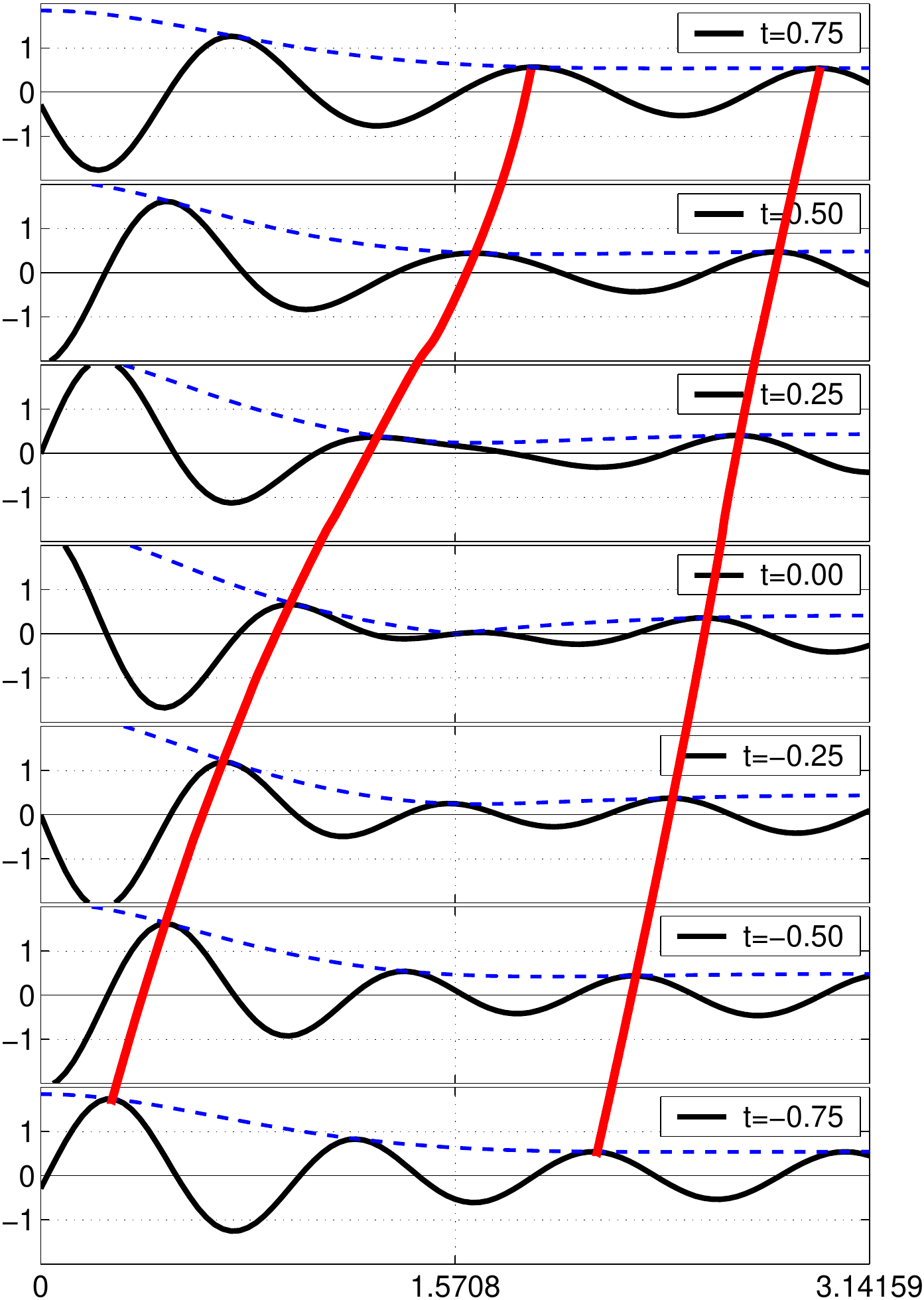}
\end{center}
\caption{Slices of the solution depicted in Figure~\ref{physicalSFB} showing detailed plots of the wave elevation (solid lines) and amplitude (dotted lines) as functions of position at various times near the wave dislocation at $(x,t) = (\pi/2,0)$. Observe the vanishing of the envelope at $(\pi/2,0)$ and the flattening of the surface in the focusing region, resulting in the disappearance of one wave when crossing $t = 0$, as shown by the curve connecting two wave crests.} 				\label{sfb_slices}
\end{figure}

\chapter{Extreme Waves in Perspective} 			\label{perspective}

\section{Bi-harmonic Wave Group Interaction}

In the previous two chapters, we have seen phenomena of wave group evolution that lead to extreme waves. One simple example of a wave group is bi-harmonic\footnote[1]{In literature, the term \textit{bichromatic} is also used to refer to the \textit{bi-harmonic} in this thesis. Normally, the term \textit{bichromatic} is used in light or optics to describe light with two different colors, from the Greek word \textit{chr$\bar{o}$ma} which means color.} wave group. This wave group is obtained by a linear superposition of two harmonic waves. The superposition of the same amplitude but different wavenumber or frequency produces a \textit{beat pattern}.
\begin{figure}[h]
\centering
\includegraphics[width = 15cm, height = 3.8cm, angle = 0]{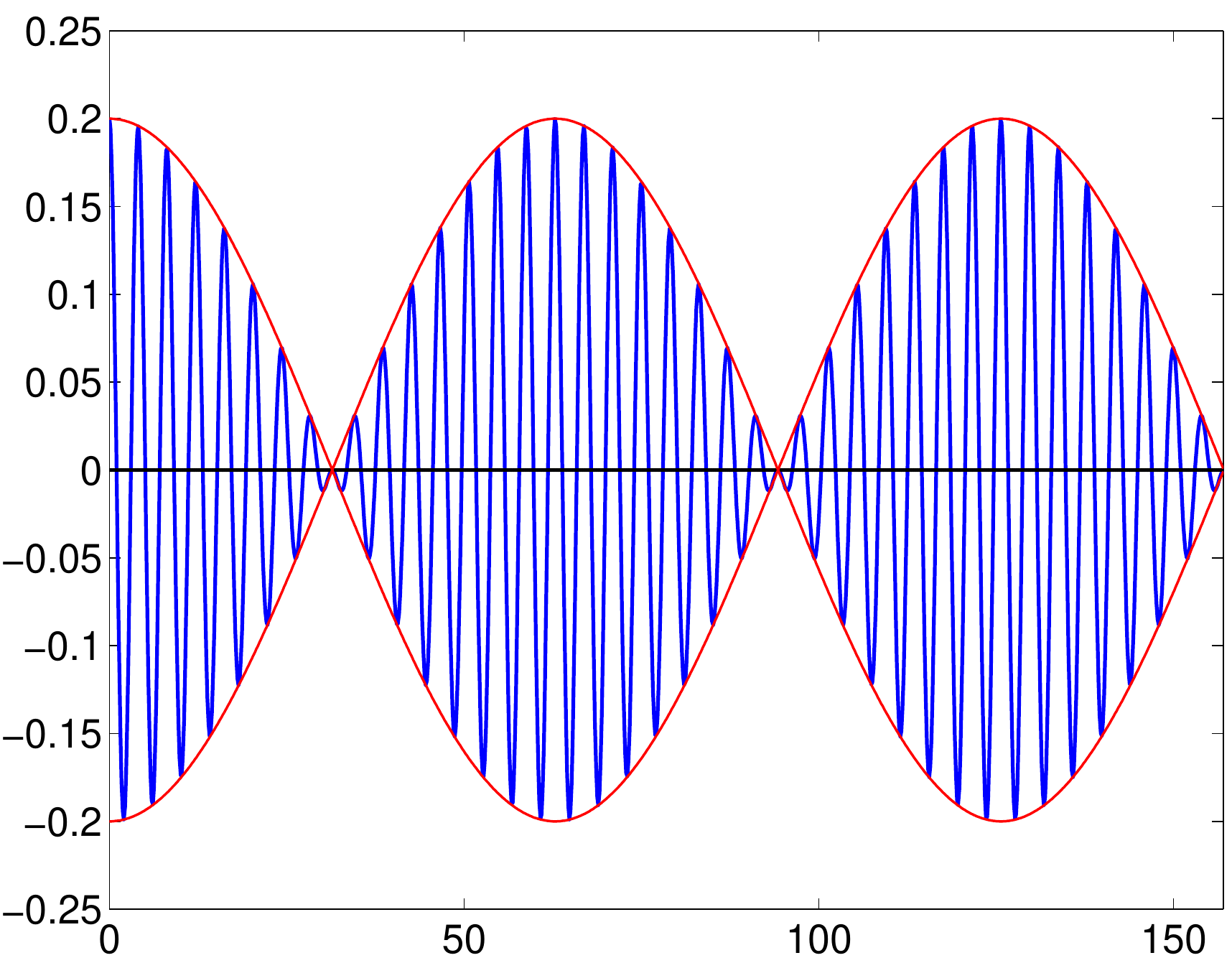}
\caption{A plot of the profile of a bi-harmonic wave group shows a \textit{beat pattern} as a result of the linear superposition of two harmonic waves. The blue profile is the carrier wave and the red profile is the envelope.}
\end{figure}

Now in this section, we will give a brief overview of wave group interaction. There are many references discussing soliton interaction, but specifically, we want to learn about the interaction of two bi-harmonic waves. For this purpose, we take the special case of two bi-harmonics with the same carrier wavelength. One example is given in Figure~\ref{eta0_car_m}.
\begin{figure}[h]
\centering
\includegraphics[width = 15cm, height = 3.8cm, angle = 0]{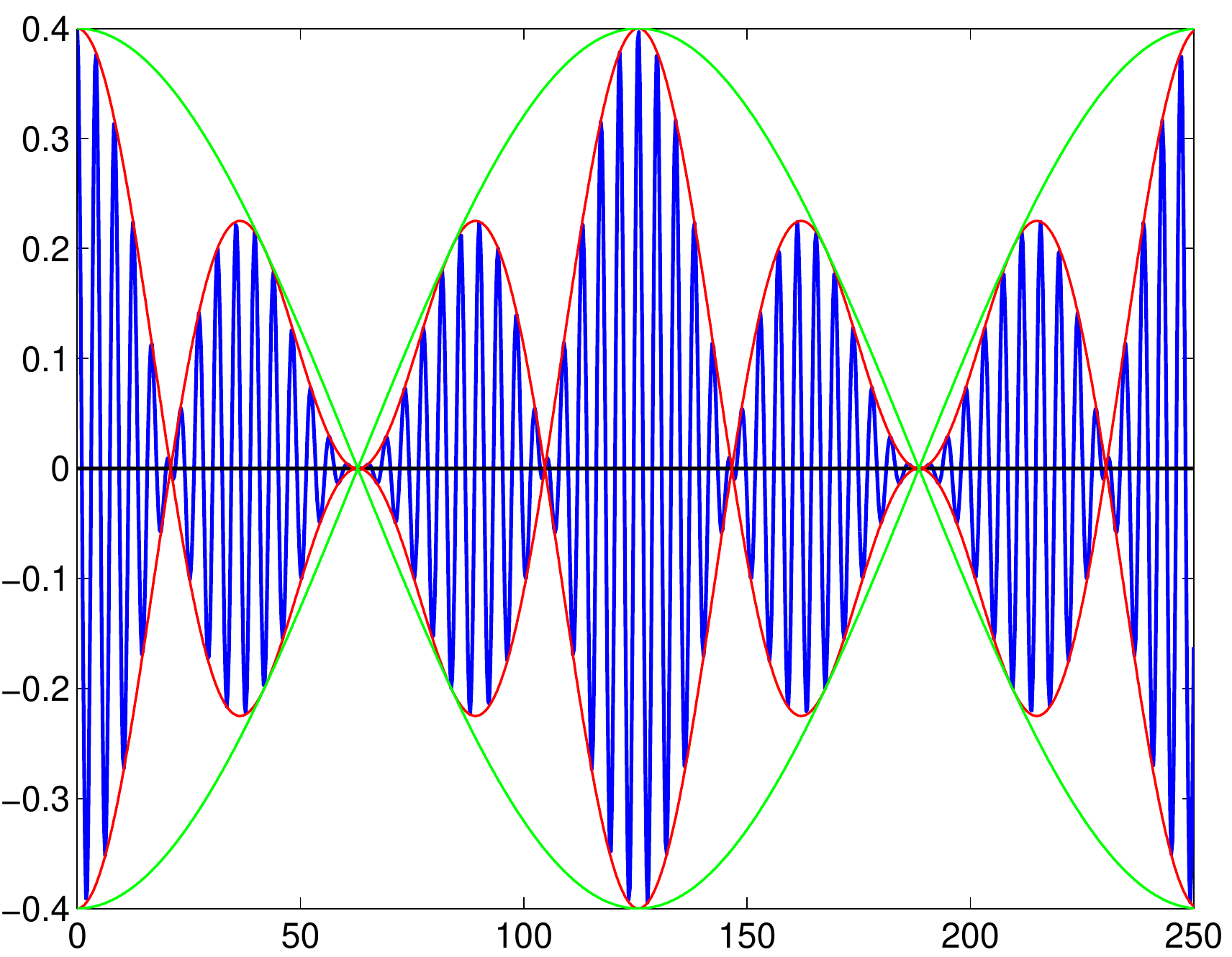}
\caption{A plot of the profile of two bi-harmonics with the same carrier wavelength when $t = 0$. The blue profile is the actual profile, the red profile is the envelope, and the green profile is the envelope of the envelope.} \label{eta0_car_m}
\end{figure}

This two bi-harmonic interaction is the linear superposition of four harmonic waves with four different wavenumbers and frequencies but with the special case that the two bi-harmonics have the same carrier wavelength. From the figure, we can see that apart from the envelope of the actual wave profile, we also have an envelope of the envelope. However, to determine the wavelength of the envelope or the envelope of the envelope is not arbitrary. It depends on the values of wavenumbers we choose. If the ratio of the envelope wavelength between the first and the second bi-harmonic is a rational number, then the envelope wavelength is the \textit{least common multiple} of both these envelope wavelengths.

\section[Bi-harmonic, Positively Modulated Bi-harmonic, and\\ Benjamin--Feir]{Bi-harmonic, Positively Modulated\\ Bi-harmonic, and Benjamin--Feir}

In this section, we want to compare the bi-harmonic (BH) wave group, positively modulated bi-harmonic (mod-BH) wave group, and Benjamin--Feir (BF) wave group. As already discussed in the previous section, the BH wave group can be obtained from the linear superposition of two sinusoidal waves with different wavenumbers or frequencies. However, we can write this wave group in two different representations, as an initial value problem, or as a signaling problem. In this section, we will use the latter. The signal of the BH wave group is given by
\begin{eqnarray}
\eta_{\textmd{\footnotesize BH}}(0,t) 
&=& q\,(\cos \omega_{1}t + \cos \omega_{2}t) \\
&=& \underbrace{2q\,\cos \mu t} \cdot \underbrace{\cos  \bar{\omega} t}, \label{BH_wavegroup} \\
& & \textmd{\footnotesize `envelope'} \quad \! \textmd{\footnotesize carrier} \nonumber
\end{eqnarray}
where $\bar{\omega} = \frac{1}{2}\left(\omega_{1} + \omega_{2}\right)$ and $\mu = \frac{1}{2}\left|\omega_{1} - \omega_{2}\right|$. From~(\ref{BH_wavegroup}), it is observed that the BH wave group can be interpreted as a carrier wave $\cos \bar{\omega} t$ modulated by an envelope wave $2q\,\cos \mu t$. The spectrum of the BH consists of two distinct frequencies $\omega_{1} = \bar{\omega} - \nu$ and $\omega_{2} = \bar{\omega} + \nu$, with the same amplitude $q$. Moreover, as we will see in the next paragraph for another type of wave group, the carrier wave of this BH wave group remains a smooth function and does not make a phase jump at the zero crossings of the so-called `envelope' BH. The evolution of the BH wave group as an initial signal of the `modified' KdV equation is discussed in detail by Cahyono~\cite{Cahy02}. The author uses a perturbation technique using series expansion. The thesis also motivates understanding of envelope deformations appearance and amplitude increase, as reported numerically in~\cite{West01}.

We know that the so-called `envelope' function of the BH wave group in~(\ref{BH_wavegroup}) is not always positive and therefore we must be careful to interpret this so-called `envelope' as an envelope of a wave group. We suggest to interpret the absolute value of the function as the envelope, and as a consequence, we have a non-negative envelope. It is referred to in~\cite{West01} that the BH wave group with a non-negative envelope as a positively modulated bi-harmonic wave group, or in short, mod-BH. The signal of mod-BH wave group is given by
\begin{eqnarray}
\eta_{\textmd{\footnotesize mod-BH}}(0,t) = 2q\,|\cos \mu t| \cdot \cos  \bar{\omega} t.
\end{eqnarray}
This signal is different from the BH wave group by a sign jump (or equivalently phase jump of $\pi$) of the carrier wave at the zero crossings of the so-called `envelope' BH. The numerical simulation has been studied by Westhuis~\cite{West01}, and there are some differences. Firstly, the mod-BH wave group propagates slightly faster than the BH wave group. Secondly, the wave heights due to the mod-BH wave group are a fraction higher than the BH wave group. Although there are some differences, the simulations also indicate that the evolution of the mod-BH and the BH are similar in the sense that both experience envelope deformation and amplitude increase. The spectrum of the mod-BH has most of its energy at the central frequency $\bar{\omega}$ and in sidebands $\bar{\omega} \pm \mu$.

Furthermore, another signal that also has a spectrum with three distinct frequencies is a regular wave modulated by two symmetrical sidebands. We call this signal as Benjamin--Feir (BF) wave group signal, with the central frequency $\bar{\omega}$ and side bands frequencies $\bar{\omega} \pm \nu$. In 1967, Benjamin and Feir showed that these modulations are unstable and will grow exponentially in time~\cite{Debn94}. Notice also that we take the BF wave group and study its evolution as our model for extreme wave generation (see Chapter~\ref{BF_SFB}). The signal can be formulated as follows:
\begin{equation}
\eta_{\textmd{\footnotesize BF}}(0,t) = q\,(1 + \epsilon\,\cos \nu t)\cdot\cos  \bar{\omega} t, 				\label{signal_BF}
\end{equation}
where $\epsilon$ is a small perturbation.
For the special (extreme) case of BF wave, when $\nu = 2\,\mu$ and $\epsilon = 1$,
the signal (\ref{signal_BF}) can be written as
\begin{eqnarray}
\eta_{\textmd{\footnotesize BF}}(0,t) &=&  q\,(1 + \cos 2\mu t)\cdot\cos  \bar{\omega} t \\
									  &=& 2q\,\cos\,^{2}\mu t \cdot \cos \bar{\omega} t.
\end{eqnarray}
From this expression, we can see that there exists a relation between the extreme case BF and the mod-BH since both of them have the same carrier wave but a different envelope. We know from Chapter~\ref{BF_SFB} that the nonlinear evolution of BF wave can be described by the Soliton on Finite Background of the NLS equation, and Chapter~\ref{WD_PS} shows that BF has phase singularity. According to the BF instability, the BH (and also mod-BH) will not grow if $(\mu/\bar{\omega})/(kq) > \sqrt{2}$, but Westhuis~\cite{West01} shows by numerical calculation that the growing sidebands is present for the BH. Therefore, the BF and mod-BH wave groups have different behavior in the instability.

\section{Extreme Waves in a Random Sea}

In Chapter~\ref{BF_SFB} and Chapter~\ref{WD_PS} we have studied the generation of extreme waves using the NLS equation, where the exact solution can be determined by taking a nonlinear extension of the BF instability. The exact analytic solutions are also found by Osborne et al.~\cite{Osbo00} using a different method, i.e. the \textit{inverse scattering transform} (IST) method. Two simple examples of an infinite class of more general unstable solutions of the NLS equation are given. One of these two simple solutions has spatial periodicity, similar to our SFB for maximum growth rate, with amplitude amplification factor~$\approx 2.4$. Another solution has temporal periodicity with amplitude amplification factor~$\approx 3.8$. The same authors also perform extensive numerical simulations for two dimensional NLS equation. The numerical results show that unstable modes do indeed exist in two dimensional NLS equation, and they can take the form of large amplitude extreme waves. The authors of that paper also provide evidence that, for a wide range of initial conditions, `coherent structures' in two dimensions (or `unstable modes' in one dimension) do indeed exist in the surface wave field and are ubiquitous in the simulations of deep water gravity waves. However, these structures are not strictly solitons, but are instead `unstable modes' or `extreme waves'.

However, several papers have been published to discuss the extreme wave generation using numerical simulations of the NLS equation with initial conditions typical examples of random oceanic sea states described by the Joint North Sea Project (JONSWAP) power spectrum \cite{Onor00b, Onor01, Onor02}. The spectrum is given by the formula
\begin{equation}
P(f) = \frac{\alpha}{f^5} \, \textmd{exp} \left[-\frac{5}{4} \left(\frac{f_{0}}{f} \right)^{4} \right] \, \gamma^{\textmd{exp} - [-[(f - f_{0})^2]/(2\sigma_{0}^{2} f_{0}^{2})]},
\end{equation}
where $\sigma_{0} = 0.07$ if $f \leq f_{0}$ and $\sigma_{0} = 0.09$ if $f > f_{0}$. This spectrum has a dominant frequency $f_{0}$, and two parameters: \textit{`enhancement' coefficient} $\gamma$ and \textit{Phillips parameter} $\alpha$. As $\gamma$ increases, the spectrum becomes higher and narrower around the spectral peak. In Figure~\ref{JONSWAP}, we show the JONSWAP spectrum for different values of $\gamma$ ($\gamma = 1, 5, 10$) for $f_{0} = 0.1$ Hz and $\alpha = 0.0081$.
\begin{figure}[h]
\begin{center}
\includegraphics[width = 0.6\textwidth]{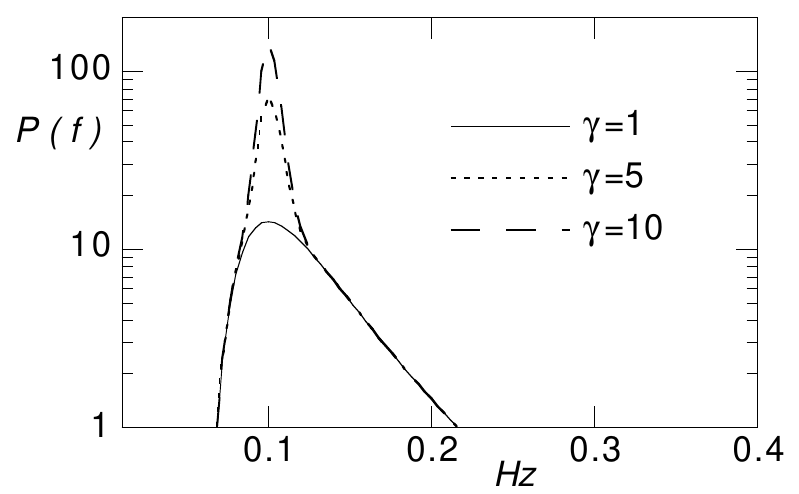} \vspace*{-0.5cm}
\end{center}
\caption{The JONSWAP spectrum for $\gamma = 1$ (solid line), $\gamma = 5$ (dotted line), and $\gamma = 10$ (dashed line); with $f_{0} = 0.1$ Hz and $\alpha = 0.0081$. (Courtesy of Onorato et al.~\cite{Onor01}.)} 				\label{JONSWAP}
\end{figure}

Onorato et al. in~\cite{Onor00b} use the NLS equation and its higher-order correction \textit{Dysthe--Lo--Mei} (DLM) equation as simple models for explaining the generation of extreme waves in one dimension. By performing a standard linear stability analysis of the NLS equation under the hypothesis of a small amplitude perturbation, the authors give a formula for amplitude amplification factor as a function of wave steepness and number of waves under the perturbation. By comparing this factor from the Inverse Scattering Theory, NLS equation, and DLM equation, it is found that as the steepness and number of waves under envelope are increasing, the amplitude amplification can be as large as a factor of three. See Figure~\ref{NLS_DLM_IST} for the plots of the amplitude amplification factor. However, when the amplitude of the perturbation is relatively large, Onorato et al. show numerically in~\cite{Onor00a} that the amplitude amplification can reach a factor~$\approx 3.5$.\\
\begin{figure}[h]
\begin{center}
\hspace*{-0.1cm}
\includegraphics[width = 0.45\textwidth]{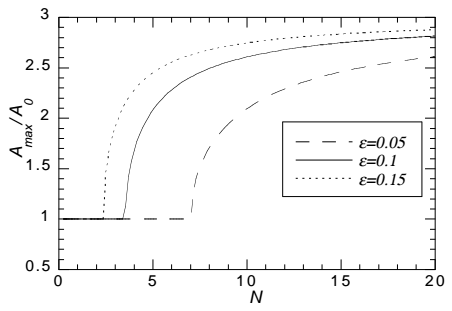}   		\hspace{0.5cm}
\includegraphics[width = 0.48\textwidth]{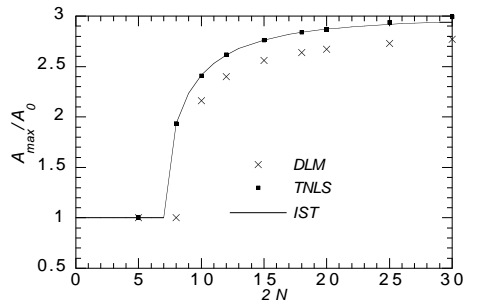}			\vspace*{-0.5cm}
\end{center}
\caption{The amplitude amplification factor as a function of the number of waves under the perturbation $N$ for different values of steepness $\epsilon = 0.05$, 0.1, and 0.15 (left). The amplitude amplification factor from the Inverse Scattering Theory (IST) and a numerical simulation of the NLS and DLM equations, with the value of initial steepness $\epsilon = 0.1$ (right). (Courtesy of Onorato et al.~\cite{Onor00b}.)}  					\label{NLS_DLM_IST}
\end{figure}

From numerical simulations using random-wave initial conditions characterized by the JONSWAP power spectrum, it is found that for increasing parameters in the spectrum (the `enhancement' coefficient and Phillips parameter), the effects of nonlinearity become more important, leading to the occurrence of extreme waves. The same result is also shown in~\cite {Onor01} numerically: extreme waves in a random sea state are more likely to occur for large values of the parameters in the spectrum.

Furthermore, Onorato et al.~\cite{Onor02} continue to investigate the appearance of extreme waves from the two-dimensional NLS equation, as considered in~\cite{Osbo00}. The authors study a more realistic case characterized by the JONSWAP spectrum for a directional wave train, while at the same time extending the order of the simulations. To investigate many of the effects of directional spreading on the generation of extreme waves, the authors choose to study higher-order formulation that shares many properties of the Zakharov equation, the so-called \textit{`generalized Dysthe equation'}.

\chapter{Conclusions} 						\label{conclusions}

In this thesis, we only examine one particular model to understand several aspects of the `Extreme Waves' project. We studied that under very long perturbation, the uniform wave train can linearly grow in time, leads into modulational instability phenomenon, in this case, the Benjamin--Feir (BF) instability. However, the nonlinear effect will bind this exponential growth, as described by an exact solution of the NLS equation, namely the Soliton on Finite Background (SFB).

We also showed that the maximum factor of amplitude amplification due to the BF instability can become as large as a factor of three. This amplitude amplification factor does not depend on the growth rate of the BF instability, but it grows monotonically for increasing the  perturbation wavelength. Furthermore, the SFB physical wave profile of the NLS equation shows the wave dislocation phenomenon for $\kappa < \sqrt{3/2}$. This phenomenon can be observed when at a specific position and specific time two waves are merging into one wave or one wave is splitting into two waves.

This phenomenon also related to another phenomenon, i.e., the phase singularity. The phase singularity happens when the local wavenumber and the local frequency become unbounded when the wave amplitude vanishes. The trajectories of the local wavenumber and the local frequency in the dispersion plane depicted the singular behavior. It is also found that the wave dislocation phenomenon does not happens for all perturbation wavenumber $\kappa$ but only for $0 < \kappa < \kappa_{\textmd{\footnotesize crit}}$, where $\kappa_{\textmd{\footnotesize crit}} = \sqrt{3/2}$ for the normalized SFB.

From the overview of extreme waves in random oceanic sea state, we learn that the JONSWAP power spectrum as an initial condition with large values of parameters will support the occurrence of extreme waves generation. Moreover, from several models that they take for extreme wave generation, it is obtained that the maximum amplitude amplification is also three although some numerical results can give a factor larger than three. Therefore, we conclude that no matter what appropriate models we choose for extreme wave generation, we will get a factor three for amplitude amplification.

\chapter{Open Problems and Future Research} 					\label{open_problems}

\begin{itemize}
\item We already know that for our NLS equation model, the maximum amplitude amplification factor can reach the value three. As shown in \cite{Onor00b}, even by taking some other models, the `Inverse Scattering' theory and the `Dysthe--Lo--Mei' (DLM) equation, the factor cannot exceed the number of three. We want to show by other models that the maximum amplitude amplification factor will also three.

\item To search the relationship between the amplitude amplification factor and the breaking phenomenon. In nature, some of the waves would break before they reach the maximum amplitude. Therefore, it is difficult to observe the maximum amplitude amplification factor in hydrodynamic laboratories.

\item The PhD thesis \cite{Cahy02} has studied the evolution of the bi-harmonic wave group based on the `modified' KdV equation. We want to study the evolution of two bi-harmonic wave groups based on this equation or other models.
\end{itemize}

\appendix

\chapter{Some Formulas Related to Soliton on Finite Background} \label{A}

The `Soliton on Finite Background' (SFB) in non-normalized form is given by:
\begin{eqnarray}
\!\!\!\!\!\!\!\!\widetilde{\psi}(\xi,\tau) \!\!\!&:=& \!\!r_{0}\left(\frac{(\kappa^{2} - 1)\, \cosh(\kappa\,\sqrt{2 - \kappa^2}\,\gamma\,r_{0}^2\,\tau) + \sqrt{1 - \frac{1}{2}\,\kappa^{2}}\, \cos\left(\kappa\,r_{0}\,\sqrt{\frac{\gamma}{\beta}}\, \xi\right)}{\cosh(\kappa\,\sqrt{2 - \kappa^2}\,\gamma\,r_{0}^2\, \tau) - \sqrt{1 - \frac{1}{2}\,\kappa^{2}} \cos\left(\kappa\,r_{0}\,\sqrt{\frac{\gamma}{\beta}}\, \xi\right)}\right. \nonumber
\\\!\!\!\! &-& \!\!\!\! i\,\left. \frac{\kappa\,\sqrt{2-\kappa^2}\, \sinh(\kappa\,\sqrt{2-\kappa^2}\,\gamma\,r_{0}^2\, \tau)} {\cosh(\kappa\,\sqrt{2-\kappa^2}\,\gamma\,r_{0}^2\, \tau) - \sqrt{1 - \frac{1}{2}\,\kappa^{2}}\, \cos\left(\kappa\,r_{0}\,\sqrt{\frac{\gamma}{\beta}}\, \xi\right)} \right)e^{\,-i\,\gamma\,r_{0}^2\,\tau}, \label{exact_lengkap}
\end{eqnarray}
where $\sigma(\kappa) = \kappa \sqrt{2 - \kappa^2}$ is the growth rate corresponding to the SFB in normalized form. Writing this SFB in the phase-amplitude form, $\widetilde{\psi}(\xi,\tau) = \widetilde{a}(\xi,\tau)\,e^{\,i\,\widetilde{\theta}(\xi,\tau)}$, then the real-valued amplitude reads
\begin{equation}
\widetilde{a}(\xi,\tau) =r_{0}\frac{\sqrt{C_{1}}}{C_{2}},
\end{equation}
where
\begin{eqnarray}
C_{1} &=& \left[(\kappa^2 - 1)\cosh(\gamma\,r_{0}^2\,\sigma(\kappa)\,\tau) + \sqrt{1 - \frac{1}{2} \kappa^2} \cos\left(\kappa \, r_{0}
\, \sqrt{\frac{\gamma}{\beta}} \, \xi \right) \right]^2 \nonumber\\
&+&\sigma^2(\kappa)\,\sinh^2(\gamma\,r_{0}^2\,\sigma(\kappa)\,\tau),
\end{eqnarray}
and
\begin{equation}
C_{2} = \cosh(\gamma\,r_{0}^2\,\sigma(\kappa)\,\tau) -\sqrt{1 - \frac{1}{2} \kappa^2} \, \cos\left(\kappa \, r_{0} \, \sqrt{\frac{\gamma}{\beta}} \, \xi\right).
\end{equation}
The real-valued phase reads
\begin{equation}
\widetilde{\theta}(\xi,\tau) = -\gamma\,r_{0}^2\,\tau + \textmd{tan}^{-1} \left(\frac{-\sigma(\kappa)\,\sinh(\gamma\,r_{0}^2\,\sigma(\kappa)\,\tau)} {\left(\kappa^2 - 1 \right) \, \cosh(\gamma\,r_{0}^2\,\sigma(\kappa)\,\tau) +\sqrt{1 - \frac{1}{2} \kappa^2} \, \cos\left(\kappa \, r_{0} \, \sqrt{\frac{\gamma}{\beta}} \, \xi \right)} \right).
\end{equation}
From the definition of the local wavenumber~(\ref{local_wavenumber}) and local frequency~(\ref{local_frequency}), we can also derive the expression for these.\\
The local wavenumber reads
\begin{eqnarray}
&\!\!\!\!\!\!\!&\!\!\!\!\!\!\!\!\!\!\!\!\!\!k(x,t) = k_{0}  \nonumber \\
&-&\frac{r_{0}\,\sqrt{\frac{\gamma}{2\,\beta}}\,\sigma^2(\kappa)\, \sin\left(\kappa\,r_{0}\,\sqrt{\frac{\gamma}{\beta}}\,\xi\right) \,\sinh(\gamma\,r_{0}^2\,\sigma(\kappa)\,\tau)} {\left[\left(1-\frac{\sigma^2(\kappa)}{\kappa^2}\right)\,\cosh(\gamma\,r_{0}^2\,\sigma(\kappa)\,\tau) +\frac{\sigma(\kappa)}{\kappa\,\sqrt{2}}\,\cos\left(\kappa\,r_{0}\,\sqrt{\frac{\gamma}{\beta}}\,\xi\right)\right]^2 +\sigma^2(\kappa)\,\sinh^2(\gamma\,r_{0}^2\,\sigma(\kappa)\,\tau)}, \nonumber \\
\end{eqnarray}
and the local frequency reads
\begin{eqnarray}
&\!\!\!\!\!\!\!&\!\!\!\!\!\!\!\!\!\!\!\!\!\!\omega(x,t) = \omega_{0} + \gamma\,r_{0}^2 \nonumber \\
\!\!\!\!\!\!\!&+&\frac{\gamma\,r_{0}^2\,\sigma^2(\kappa)\,\cosh(\gamma\,r_{0}^2\,\sigma(\kappa)\,\tau) \left[\left(1-\frac{\sigma^2(\kappa)}{\kappa^2}\right)\,\cosh(\gamma\,r_{0}^2\,\sigma(\kappa)\,\tau) +\frac{\sigma(\kappa)}{\kappa\,\sqrt{2}}\,\cos\left(\kappa\,r_{0}\,\sqrt{\frac{\gamma}{\beta}}\,\xi\right)\right]} {\left[\left(1-\frac{\sigma^2(\kappa)}{\kappa^2}\right)\,\cosh(\gamma\,r_{0}^2\,\sigma(\kappa)\,\tau) +\frac{\sigma(\kappa)}{\kappa\,\sqrt{2}}\,\cos\left(\kappa\,r_{0}\,\sqrt{\frac{\gamma}{\beta}}\,\xi\right)\right]^2
+ \sigma^2(\kappa)\,\sinh^2(\gamma\,r_{0}^2\,\sigma(\kappa)\,\tau)} \nonumber \\
\!\!\!\!\!\!\!&-&\frac{\gamma\,r_{0}^2\,\sigma^2(\kappa)\,\left(1-\frac{\sigma^2(\kappa)}{\kappa^2}\right)\, \sinh^2(\gamma\,r_{0}^2\,\sigma(\kappa)\,\tau)}{\left[\left(1-\frac{\sigma^2(\kappa)}{\kappa^2}\right)\, \cosh(\gamma\,r_{0}^2\,\sigma(\kappa)\,\tau)+\frac{\sigma(\kappa)}{\kappa\,\sqrt{2}}\,\cos\left(\kappa\,r_{0}\, \sqrt{\frac{\gamma}{\beta}}\,\xi\right)\right]^2+\sigma^2(\kappa)\,\sinh^2(\gamma\,r_{0}^2\,\sigma(\kappa)\,\tau)}, \nonumber \\
\end{eqnarray}
with $\xi = x - \Omega'(k_{0})t$ and $\tau = t$.

\renewcommand{\bibname} {References}

\end{document}